\newcommand{\NF}[1]{#1_{\mathit{NF}}}
\def\rightarrowstar{\rightarrow^*}
\title{Confluence and Convergence in Probabilistically Terminating Reduction Systems}
\titlerunning{Confluence in Probabilistically Terminating Reduction Systems} 
\author{Maja H. Kirkeby \and Henning Christiansen}
\institute{Computer Science, Roskilde University, Denmark\\  \email{majaht@ruc.dk} and \email{henning@ruc.dk}}
\begin{document}
\maketitle

\begin{abstract}
Convergence of an abstract reduction system (ARS) is the property that any derivation from an initial state will end in the same final state, a.k.a.\ normal form.
We generalize this for probabilistic ARS as almost-sure convergence,
meaning that the normal form is reached with probability one, even
if diverging derivations may exist.
We show and exemplify properties that can be used for proving almost-sure convergence of probabilistic ARS, generalizing known results from ARS.

%
%
\end{abstract}

\section{Introduction}
Probabilistic abstract reduction systems, PARS, are general
models of systems that develop over time in discrete steps~\cite{DBLP:conf/rta/BournezK02}.
In each non-final state, the choice of successor state is governed by a
probability distribution, which in turn induces a global, probabilistic
behaviour of the system.  
Probabilities make termination more than a simple yes-no question,
and the following criteria have been proposed:
 \emph{probabilistic termination} --
a derivation terminates with some probability $> 0$ -- and
\emph{almost-sure termination} -- a derivation terminates with probability $= 1$,
even if infinite derivations may exist (and whose total probability thus amounts to 0).
When considering a PARS as a computational system, almost-sure termination
may be the most interesting, and there exist well-established methods for
proving this property~\cite{Bournez05,Fioriti2015}.

PARS cover a variety of probabilistic algorithms and
programs, scheduling strategies and protocols~\cite{Baier2008,DBLP:conf/rta/BournezK02,Rabin1982}, and PARS is
a well-suited abstraction level for better understanding their termination
and correctness properties.
Randomized or probabilistic algorithms (e.g.,~\cite{Babai1979,Maffioli1984,MotwaniRaghavan1995}) come in two groups:  Monte Carlo Algorithms that allow a set of alternative outputs (typically only correct with a certain probability or within a certain accuracy), e.g., Karger-Stein's Minimum Cut~\cite{DBLP:journals/jacm/KargerS96}, Monte Carlo integration and Simulated Annealing~\cite{DBLP:journals/science/KirkpatrickGV83}; and  Las Vegas Algorithms, that provide one (correct) output and that may be simpler and on average more efficient than their deterministic counterparts, e.g., Randomized Quicksort~\cite{DBLP:journals/jacm/FrazerM70}, checking equivalence of circular lists~\cite{DBLP:journals/ipl/Itai79}, probabilistic modular GCD~\cite{DBLP:conf/eurosam/Zippel79}.
We focus on results that are relevant for the latter kind of
systems, and here the property of \emph{convergence} is interesting,
as it
may be a necessary condition for correctness: a system is convergent if it is guaranteed to terminate {\small\hbox to 0pt{\vbox to 0pt{\vbox to 50pt{\vfill\hrule\vskip 4pt
\hbox{\small  {\it Preliminary proceedings of LOPSTR 2017.} October 10--12, 2017, Namur, Belgium}
\hbox{\small The research leading to this paper
      is supported by The Danish Council for}
\hbox{\small  Independent Research, Natural Sciences,
      grant no.~DFF 4181-00442.}
}\vss}}}with a unique result.
We introduce  the notion of
\emph{almost-sure convergence} for PARS,
meaning that a unique result is found with probability $=1$,
although there may be diverging computations;
this property is a necessary condition for partial correctness,
more precisely a strengthened version of partial correctness
where the probability of not getting a result is zero.

The related notion of {\em confluence} has been extensively studied
for 
ARS, e.g.,~\cite{BaderNipkow1999,Huet1980}
, and especially for terminating ones
for which confluence implies convergence: a system is confluent if, whenever alternative paths (i.e., repeated reductions; computations) are possible from some state, these paths can be extended to join in a common state.
Newman's lemma~\cite{Newman42} from 1942 is one of the most central results: in a terminating system,
confluence (and thus convergence) can be shown from a simpler property called local confluence.
In, e.g., term rewriting~\cite{BaderNipkow1999} and (a subset of) the programming language CHR~\cite{Abdennadher1997,DBLP:conf/cp/AbdennadherFM96},
proving local confluence
may be reduced to a finite number of cases, described by \emph{critical pairs} (for a definition, see these references), which in some cases may be checked automatically.
It is well-known that Newman's lemma does not generalize to non-terminating systems (and thus neither to almost-sure terminating ones); see, e.g.,~\cite{Huet1980}.

Probabilistic and almost-sure versions of confluence
were introduced 
concurrently by Fr\"uhwirth et~al.~\cite{Fruhwirth2002} -- in the context of
a probabilistic version of  CHR -- and by
Bournez and Kirchner~\cite{DBLP:conf/rta/BournezK02} in more generality for PARS.
However, the definitions in the latter reference were
given indirectly, assuming a deep insight into Homogeneous Markov Chain Theory, and a number of central properties were listed without hints of proofs.

In the present paper, we consider the important property of almost-sure convergence for
PARS and state
properties that are relevant for proving it.
In contrast to~\cite{DBLP:conf/rta/BournezK02}, our definitions
are self-contained, based on elementary math, and proofs
are included.
One of our main and novel results is that almost-sure termination together with confluence (in the classical sense) 
gives almost-sure convergence.
Almost-sure convergence and almost-sure termination
were introduced  in an early 1983
paper~\cite{Hart1983} for a specific class of probabilistic programs with finite
state space, but our generalization to PARS'  appears to be new.

In 1991, Curien and Ghelli~\cite{CurienGhelli1991} described a powerful method for proving confluence of non-probabilistic systems,
using suitable transformations from the original system into one, known to be confluent.
We can show how this result applies to probabilistic systems, and we develop an analogous method for also proving non-confluence.

In section~\ref{sec:basics}, we review definitions for abstract reduction systems
and introduce and motivate our choices of definitions for their probabilistic counterparts;
a proof that the defined probabilities actually constitute a probability distribution is found in the Appendix.
Section~\ref{sec:props-of-PARS} formulates and proves important properties, relevant for showing {almost-sure convergence} of particular systems. Section~\ref{sec:transformation} goes in detail with applications of the transformational approach~\cite{CurienGhelli1991} to (dis-) proving {almost-sure convergence}, and in Section~\ref{sec:examples} we demonstrate the use of this for a random walk system and Hermans' Ring.
We add a few more comments on selected, related work in section~\ref{sec:related},
and section~\ref{sec:conclusion} provides a summary and suggestions for future work.

\section{Basic definitions}\label{sec:basics}
The definitions for non-probabilistic systems are standard; see, e.g.,~\cite{Huet1980,BaderNipkow1999}.

\begin{definition}[ARS]
An \emph{Abstract Reduction System} is a pair $R = (A, \rightarrow)$ where the \emph{reduction} $\rightarrow$ is a 
binary relation on a countable set $A$.
\end{definition}\noindent
Instead of $(s,t) \in\, \rightarrow $ we write $s \rightarrow t$ (or $t \leftarrow s$ when convenient),
and $s \rightarrowstar t$ denotes the transitive reflexive closure of $\rightarrow$.

In the literature, an ARS is often required to have only finite branching. i.e., for any element $s$, the set $\{t\mid s\rightarrow t\}$
is finite. We do not require this, as the implicit restriction to countable branching is sufficient for our purposes.


The set of \emph{normal form}s $\NF{R}$ are those $s\in A$ for which there is no $t \in A$ such that $s \rightarrow t$.
For given element $s$, the \emph{normal forms of $s$}, are defined as the set $\NF{R}(s)= \{t\in \NF{R}\mid s\rightarrowstar t\}$.
An element which is not a normal form is said to be 
\emph{reducible}; i.e., an element $s$ is reducible if and only if  $\{ s' \mid s \rightarrow s'\} \neq \emptyset$.

A \emph{path} from an element $s$ is a (finite or infinite) sequence of reductions $s \rightarrow s_1 \rightarrow s_2 \rightarrow \cdots$; a finite path $s \rightarrow s_1 \rightarrow s_2 \rightarrow \cdots  \rightarrow s_n$ has \emph{length $n$} ($n \geq 0$);
in particular, we recognize an empty path (of length $0$) from a given state to itself.
For given elements $s$ and $t\in\NF{R}(s)$, $\Delta(s,t)$ denotes the set of finite paths $s \rightarrow \cdots \rightarrow t$ (including the empty path);
$\Delta^{\infty}(s)$ denotes the set of infinite paths from $s$. A system is
\begin{itemize}
\item \emph{confluent} if for all $s_1 \leftarrow^{*} s \rightarrow^{*} s_2$ there is a $t$ such that $s_1 \rightarrow^{*} t \leftarrow^{*} s_2$,
\item \emph{locally confluent} if for all $s_1 \leftarrow s \rightarrow s_2$ there is a $t$ such that $s_1 \rightarrow^{*} t \leftarrow^{*} s_2$,
\item \emph{terminating}\footnote{A terminating system is also called \emph{strongly normalizing} elsewhere, e.g.,~\cite{CurienGhelli1991}.} iff it has no infinite path, 
\item \emph{convergent} iff it is terminating and confluent, 
and 
\item \emph{normalizing}\footnote{A normalizing system is also called \emph{weakly normalizing} or \emph{weakly terminating} elsewhere, e.g.,~\cite{CurienGhelli1991}. } iff every element $s$ has a normal form, i.e., there is an element $t \in \NF{R}$ such that $s \rightarrowstar t$. 
\end{itemize}
\par\noindent
Notice that a normalizing system may not be terminating. A fundamental result for ARS is Newman's Lemma: a terminating system is confluent if and only if it is locally confluent. 

\medskip\noindent
The following property indicates the complexity of the probability measures that are needed in order to cope
with paths in probabilistic abstract reduction systems defined over countable sets.
\begin{proposition}
Given an ARS as above and given elements $s$ and $t\in\NF{R}(s)$, it holds that $\Delta(s,t)$ is countable,
and $\Delta^{\infty}(s)$ may or may not be countable.
\end{proposition}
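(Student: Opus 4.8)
The plan is to treat the two claims separately: the countability of $\Delta(s,t)$ by a direct counting argument, and the indeterminacy of $\Delta^{\infty}(s)$ by exhibiting one system of each kind.

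For the first claim I would observe that, by definition, a finite path $s \rightarrow s_1 \rightarrow \cdots \rightarrow t$ \emph{is} the finite sequence of states $(s, s_1, \ldots, t)$ it passes through, and hence is an element of the set $A^{*} = \bigcup_{n \geq 0} A^{n}$ of all finite sequences over $A$. Since $A$ is countable, each finite power $A^{n}$ is countable, being a finite product of countable sets; therefore $A^{*}$ is a countable union of countable sets and is itself countable. As $\Delta(s,t) \subseteq A^{*}$, it follows that $\Delta(s,t)$ is countable. The only point to record carefully is that paths are indexed as sequences of states rather than as sets, so that even paths revisiting states (the relation $\rightarrow$ need not be acyclic) are accounted for injectively.

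For the second claim I would supply two witnesses. For the countable case, any terminating system suffices, since then $\Delta^{\infty}(s) = \emptyset$ for every $s$; or, more concretely, a linear system $s \rightarrow s_1 \rightarrow s_2 \rightarrow \cdots$ in which every state has a unique successor has exactly one infinite path. For the uncountable case, I would build the infinite binary branching tree: take $A = \{0,1\}^{*}$, the countable set of finite binary strings, and set $w \rightarrow w0$ and $w \rightarrow w1$ for every $w$. Starting from the empty string $\epsilon$, an infinite path corresponds exactly to an infinite sequence of left/right choices, so $\Delta^{\infty}(\epsilon)$ is in bijection with $\{0,1\}^{\mathbb{N}}$, which is uncountable by Cantor's diagonal argument.

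I expect no genuine obstacle, only one point that deserves emphasis: the binary tree shows that even when $A$ is countable and the branching at each state is finite, the set of infinite paths can have the cardinality of the continuum. The finite- and infinite-path counts diverge precisely because a finite path lives in $A^{*}$, a countable union of finite products, whereas an infinite path lives in the countably infinite product $A^{\mathbb{N}}$; establishing the bijection with $\{0,1\}^{\mathbb{N}}$ cleanly is the substance of the argument, and the remainder is routine bookkeeping.
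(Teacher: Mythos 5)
Your proposal is correct, and the first half coincides with the paper's argument: both embed $\Delta(s,t)$ into the countable union $\bigcup_n A^n$ of finite powers of the countable set $A$. The second half differs in the choice of witnesses. The paper only exhibits the uncountable case, using the two-state system $\langle \{0,1\}, \{i\rightarrow j \mid i,j\in\{0,1\}\}\rangle$ and reading each infinite path as a real number in the unit interval; you instead build the infinite binary tree on $A=\{0,1\}^{*}$ and put $\Delta^{\infty}(\epsilon)$ in bijection with $\{0,1\}^{\mathbb{N}}$, invoking Cantor directly. Your route has two small advantages: it sidesteps the subtlety in the paper's argument that binary expansions of reals are not unique (the paper's map from paths to $[0,1]$ is only a surjection, which still suffices, but this is left unsaid), and you explicitly supply a witness for the ``may be countable'' half of the claim (a terminating system, where $\Delta^{\infty}(s)=\emptyset$, or a linear chain with a single infinite path), which the paper treats as implicit. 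Your closing remark---that uncountably many infinite paths arise even with finite branching over a countable carrier---is also the real content of the paper's own example, just made explicit.
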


\begin{proof}
For the first part, $\Delta(s,t)$ is isomorphic to a subset of
$\bigcup_{n=1,2,\ldots} A^n$. A countable union of countable sets is countable, so $\Delta(s,t)$ is countable.

For the second part,  consider the ARS $\langle \{0,1\}, \{i\rightarrow j \mid i,j\in \{0,1\}\}\rangle$.
Each infinite path can be read as a real number in the unit interval, and any such real number can be
described by an infinite path. The real numbers are not countable. 
\end{proof}
This means that we can define discrete and summable probabilities over $\Delta(s,t)$, and -- which we will avoid --
considering probabilities over the space $\Delta^{\infty}(s)$ requires a more advanced measure.

In the next definition, a path is considered a Markov process/chain, i.e., each reduction step is independent of the previous ones, and thus the probability of a path is defined as a product in the usual way. PARS can be seen as a special
case of Homogenous Markov Chains, cf.~\cite{DBLP:conf/rta/BournezK02}, but for practical reasons
it is relevant to introduce them as generalizations of ARS.

\begin{definition}[PARS]\label{def:pars}
 A \emph{Probabilistic Abstract Reduction System} is a pair  $R^P = (R, P)$  where $R = (A, \rightarrow)$ is an ARS, and for each reducible element $s \in A\setminus\NF{{R}}$, $P(s \rightarrow \cdot)$ is a probability distribution over the reductions from $s$, i.e.,
  $\sum_{s \rightarrow t} P(s \rightarrow t) = 1$;
it is assumed, that for all $s$ and $t$, $P(s\rightarrow t)>0$ if and only if $s\rightarrow t$.

The \emph{probability of a finite path} $s_0 \rightarrow{} s_1 \rightarrow{}{} \ldots \rightarrow{}{} s_n$ with $n\geq 0$ is given as %
$${P(s_0 \rightarrow s_1  \rightarrow \ldots  \rightarrow s_n) = \prod_{i=1}^n  P(s_{i-1}  \rightarrow s_i) }.$$
For any element $s$ and normal form $t\in\NF{R}(s)$, the \emph{probability of $s$ reaching  $t$},
written $P(s \rightarrow^* t)$, is defined as
$$P(s \rightarrow^* t) = \sum_{\delta \in \Delta(s,t)} P(\delta);$$
the \emph{probability of $s$ not reaching a normal form} (or \emph{diverging}) is defined as
$$P(s \rightarrow^\infty) = 1 - \sum_{t \in \NF{R}(s)} P(s \rightarrow^* t).$$
When referring to \emph{confluence}, \emph{local confluence}, \emph{termination}, and \emph{normalization} of a PARS, we refer to these properties for the underlying ARS.
\end{definition}
Notice that when $s$ is a normal form then $P(s \rightarrow^* s)=1$ since $\Delta(s,t)$ contains only the empty path with probability $\prod_{i=1}^0  P(s_{i-1}  \rightarrow s_i)=1$.
It is important that $P(s \rightarrow^* t)$ is defined only when $t$ is a normal form of $s$ since otherwise,
the defining sum may be $\ge 1$, as demonstrated by the following example.
\begin{example}\label{ex:notP}
Consider the PARS $R^P$ given in Figure \ref{fig:examples-abc}\protect\subref{fig:RP}; formally, 
$R^P = ((\{0,1\},$\break$\{0\!\shortrightarrow \!1, 1\!\shortrightarrow \!1\}),\;P)$ with $P(0\!\shortrightarrow\!1) = 1$ and $P(1\!\shortrightarrow\!1) = 1$.
An attempt to define $P(0 \!\rightarrow^*\! 1)$ as in Def.~\ref{def:pars},
for the reducible element $1$, does not lead to a probability, i.e., $P(0 \!\rightarrow^*\! 1) \not\leq 1$:
$P(0 \!\rightarrow^*\! 1) = P(0\!\shortrightarrow\!\! 1)+ P(0\!\shortrightarrow\!\! 1\!\!\shortrightarrow\!\! 1)+P(0\!\shortrightarrow\! 1 \!\!\shortrightarrow\!\! 1\!\!\shortrightarrow\!\! 1)+\ldots = \infty.$
\end{example}
%
\begin{figure}[]
    \centering
    \subfloat[][]{\begin{tikzpicture}[->, auto, node distance=0.8cm, every loop/.style={},
                    main node/.style={
                    }]
                     \node[main node] (0) [] {0};            
 \node[main node] (1) [right of=0] {1};
 \tikzset{every loop/.style={in=120,out=60,looseness=5}}
  \path[every node/.style={font=\sffamily\tiny}]
   (0) edge node [above]{1} (1)
  (1) edge [loop above] node [above] {1} (1);
\end{tikzpicture}\label{fig:RP}
}
    \subfloat[][]{%
        \centering
       \begin{tikzpicture}[->, auto, node distance=0.8cm, every loop/.style={},
                    main node/.style={
                    }]
                 
 \node[main node] (0) [] {$0$};
 \node[main node] (1) [right of=0] {$1$};
 \node[main node] (a) [right of=1] {$a$};
 \tikzset{every loop/.style={in=120,out=60,looseness=5}}
  \path[every node/.style={font=\sffamily\tiny}]
   (0) edge node [above]{1} (1)
  (1) edge node [above]{1/2} (a)
  (1) edge [loop above] node [above] {1/2} (1);
  
\end{tikzpicture}\centering
        \label{fig:a}
}
    ~ 
    \subfloat[][]{%
        \centering
        \begin{tikzpicture}[->, auto, node distance=0.8cm, every loop/.style={},
                    main node/.style={
                    }]
                 
 \node[main node] (a) [] {$a$};
 \node[main node] (0) [right of=a] {$0$};
 \node[main node] (1) [right of=0] {$1$};
 \node[main node] (b) [right of=1] {$b$};
 
  \path[every node/.style={font=\sffamily\tiny}]
  (0) edge node [above] {1/2} (a)
  (0) edge [bend left=45] node [above] {1/2} (1)
  
  (1) edge [bend left=45] node [below] {1/2} (0)
  (1) edge node [above] {1/2} (b);
\end{tikzpicture}
        \label{fig:b}
}
    ~ 
    \subfloat[][]{%
        \centering
        \begin{tikzpicture}[->, auto, node distance=1.1cm, every loop/.style={},
                    main node/.style={}]         
 \node[main node] (x1) [] {0};
 \node[main node] (x2) [right of=x1] {1};
 \node[main node] (x3) [right of=x2] {2};
 \node[main node] (x4) [right of=x3] {3};
  \node[main node] (x5) [right of=x4] {\ldots};
    \node[main node] (x6) [below right of=x4] {\ldots};
 \node[main node] (a) [below of=x1] {$a$};
  \path[every node/.style={font=\sffamily\tiny}]
  (x1) edge node [above] {$1\!-\!1/4$} (x2)
  (x1) edge node [left] {$1/4$} (a)
  (x2) edge node [above] {$1\!-\!1/4 ^2$} (x3)
  (x2) edge [out=270, in=0]node [left, near start] {$1/4^2$} (a)
  (x3) edge node [above] {$1\!-\!1/4^3$} (x4)  
    (x3) edge [out=270, in=0] node  [left, very near start]  {$1/4^3$} (a)
  (x4) edge node [above] {$1\!-\!1/4^4$} (x5) 
  (x4) edge [out=250, in=3] node  [left, very near start]  {$1/4^4$} (a)
;
\end{tikzpicture}
        \label{fig:c}
}
   \caption{PARS with different properties, see Table~\ref{tab:ex:examples-abcc}.}
    \label{fig:examples-abc}
\end{figure}
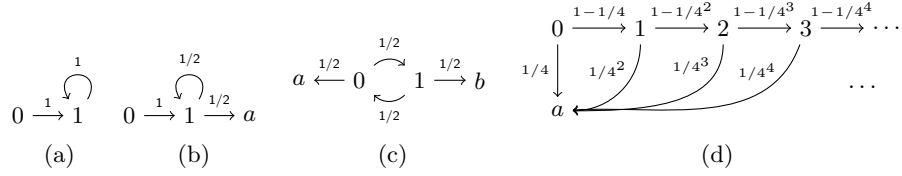
\noindent
The following proposition justifies that we refer to 
$P$ as a probability function. 
\begin{restatable}{proposition}{mainclaim}
\label{prop:prob-dist}
For an arbitrary finite path $\pi$, $1\ge P(\pi)>0$.
For every element $s$, $P(s \rightarrow^* \cdot)$ and $P(s\rightarrow^\infty)$ comprise a probability distribution,
i.e., $\forall t\in\NF{R}(s)\colon 0\leq P(s \rightarrow^* t)\leq 1$;
$0\leq P(s\rightarrow^\infty)\leq 1$; and
$\sum_{t\in\NF{R}(s)}P(s \rightarrow^* t) + P(s\rightarrow^\infty)=1$.
\end{restatable}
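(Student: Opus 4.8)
The plan is to dispatch the easy parts first and concentrate the real work on a single inequality. For the first claim, a finite path $\pi = s_0 \rightarrow \cdots \rightarrow s_n$ has $P(\pi) = \prod_{i=1}^n P(s_{i-1} \rightarrow s_i)$; each factor is strictly positive (by the assumption that $P(s\rightarrow t) > 0$ iff $s \rightarrow t$, and every step of a path is a genuine reduction) and at most $1$ (since each $P(s_{i-1} \rightarrow \cdot)$ is a distribution with nonnegative terms summing to $1$), so the product lies in $(0,1]$; the empty product for $n=0$ gives $1$. The total-mass identity $\sum_{t\in\NF{R}(s)} P(s \rightarrow^* t) + P(s \rightarrow^\infty) = 1$ is then immediate, since $P(s \rightarrow^\infty)$ is \emph{defined} as $1 - \sum_{t} P(s \rightarrow^* t)$. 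Likewise $P(s \rightarrow^* t) \geq 0$ and $P(s \rightarrow^\infty) \leq 1$ hold trivially from nonnegativity of path probabilities. What remains --- and what all the nontrivial bounds reduce to --- is the single inequality $\sum_{t \in \NF{R}(s)} P(s \rightarrow^* t) \leq 1$, which yields both $P(s \rightarrow^* t) \leq 1$ for each $t$ and $P(s \rightarrow^\infty) \geq 0$.

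To prove this inequality I would introduce, for each $n \geq 0$, the set $S_n$ of finite paths from $s$ that either have length strictly less than $n$ and end in a normal form, or else have length exactly $n$; write $T_n = \sum_{\delta \in S_n} P(\delta)$. The key lemma is the invariant $T_n = 1$ for all $n$, proved by induction. The base case $n = 0$ is just the empty path, with probability $1$. For the inductive step I would split the length-$n$ paths in $S_n$ according to whether their endpoint is a normal form or reducible: the normal-form ones migrate unchanged into $S_{n+1}$ (now of length $<n+1$), whereas a path $\delta$ ending in a reducible state $u$ is replaced in $S_{n+1}$ by its one-step extensions $\delta \rightarrow v$. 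Grouping these extensions gives $\sum_{u \rightarrow v} P(\delta \rightarrow v) = P(\delta)\sum_{u \rightarrow v} P(u \rightarrow v) = P(\delta)$, using that the outgoing probabilities at $u$ sum to $1$; hence the total mass is preserved and $T_{n+1} = T_n = 1$.

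Finally I would let $n \to \infty$. Since the subsets $A_n \subseteq S_n$ of paths of length $<n$ ending in a normal form increase to the set $\bigcup_{t \in \NF{R}(s)} \Delta(s,t)$ of all finite paths from $s$ ending in a normal form, and since the $\Delta(s,t)$ are pairwise disjoint, monotone convergence for sums of nonnegative terms gives $\sum_{t} P(s \rightarrow^* t) = \sup_n \sum_{\delta \in A_n} P(\delta) \leq \sup_n T_n = 1$. I expect the main obstacle to be bookkeeping rather than ideas: because branching is only assumed countable, the sets $S_n$ and $A_n$ may themselves be infinite, so the regrouping in the inductive step and the passage to the limit must be justified as manipulations of (possibly infinite) sums of nonnegative reals, where unconditional summability is what makes reordering and grouping legitimate. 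This is precisely where the preceding proposition is needed, since it guarantees that each $\Delta(s,t)$, and hence each $A_n$, is countable.
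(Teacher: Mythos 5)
Your proposal is correct and follows essentially the same route as the paper's own proof: your sets $S_n$ and invariant $T_n=1$ are exactly the paper's length-truncated families $\{\Delta^{(n)}(s,t)\}_{t}$ together with $\Delta^{(n)}(s,\sharp)$ and its inductive proof that each $P^{(n)}$ has total mass $1$, with the same one-step-extension argument at reducible endpoints and the same passage to the limit. The only (cosmetic) difference is that you first reduce everything to the single inequality $\sum_{t}P(s\rightarrow^{*}t)\le 1$ and invoke monotone convergence explicitly, whereas the paper phrases the limit step as convergence of the distributions $P^{(n)}$ to $P$.
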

%
\begin{proof}The proofs are simple but lengthy and are given in the Appendix.
\end{proof}
%
%
Proposition \ref{prop:countable-infinity-summation} justifies that we refer to $P(s\rightarrow^\infty)$ as a probability of divergence.
\begin{restatable}{proposition}{secondclaim}\label{prop:countable-infinity-summation}
Consider a PARS which has an element $s$ for which $\Delta^\infty(s)$ is countable (finite or infinite).
Let $P(s_1\!\rightarrow\! s_2\!\rightarrow\! \cdots ) =
\prod_{i=1,2,\ldots}\!\! P(s_i\!\rightarrow\!s_{i+1})$ be the probability of an infinite path then $P(s\!\rightarrow^\infty)=\sum_{\delta\in\Delta^\infty(s)} P(\delta)$ holds.
%
%
\end{restatable}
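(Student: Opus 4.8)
The plan is to establish the two inequalities $\sum_{\delta\in\Delta^\infty(s)}P(\delta)\le P(s\rightarrow^\infty)$ and $P(s\rightarrow^\infty)\le\sum_{\delta\in\Delta^\infty(s)}P(\delta)$, of which only the second will use countability of $\Delta^\infty(s)$. If $s\in\NF{R}$, then $\Delta^\infty(s)=\emptyset$ and $P(s\rightarrow^\infty)=1-P(s\rightarrow^* s)=0$, so both sides vanish; hence I assume $s$ reducible. I would first rephrase $P(s\rightarrow^\infty)$ as a limit of ``survival masses''. For $n\ge 0$ let $R_n$ be the sum of $P(\pi)$ over the finite paths $\pi=s\rightarrow\cdots\rightarrow s_n$ of length $n$ whose endpoint $s_n$ is reducible, and let $N_n$ be the analogous sum over the length-$n$ paths ending in a normal form. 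Since $\sum_{s_n\rightarrow w}P(s_n\rightarrow w)=1$ at every reducible endpoint, each $\pi$ counted in $R_n$ distributes its mass $P(\pi)$ over its one-step extensions, giving the conservation law $R_n=R_{n+1}+N_{n+1}$; with $R_0=1$ and $N_0=0$ this telescopes to $1=R_n+\sum_{k\le n}N_k$. As $\sum_k N_k=\sum_{t\in\NF{R}(s)}P(s\rightarrow^* t)$, letting $n\to\infty$ yields $P(s\rightarrow^\infty)=\lim_n R_n$, the limit existing because $(R_n)$ is non-increasing.

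For the first inequality, take any finite $F\subseteq\Delta^\infty(s)$. Distinct infinite paths differ at some finite position, so for $n$ large enough the length-$n$ prefixes of the members of $F$ are pairwise distinct; each such prefix ends in a reducible state (an infinite path never meets a normal form), so their probabilities sum to at most $R_n$. Letting $n\to\infty$, the prefix probabilities decrease to the $P(\delta)$, whence $\sum_{\delta\in F}P(\delta)\le\lim_n R_n=P(s\rightarrow^\infty)$; taking the supremum over $F$ gives the inequality. This step uses nothing about countability.

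The reverse inequality is the crux and is where countability enters. I would split the length-$n$ reducible-ending prefixes into the \emph{live} ones (a prefix of at least one infinite path) and the \emph{dead} ones (reducible endpoint, but no infinite path continues it), writing $R_n=G_n+B_n$ accordingly. The dead endpoints span a well-founded subsystem, and I would prove a weighted version of K\"onig's lemma -- if the surviving mass from a state does not tend to $0$ then, picking at each step a child whose limiting survival mass is positive (one exists by dominated convergence, using that the one-step distribution is summable), one builds an infinite path -- to conclude that the survival mass of a well-founded subsystem vanishes, i.e.\ $B_n\to 0$. Hence $\lim_n G_n=P(s\rightarrow^\infty)$, and it remains to show $\lim_n G_n\le\sum_{\delta\in\Delta^\infty(s)}P(\delta)$.

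I expect this last point to be the main obstacle: a live prefix $\pi$ with endpoint $v$ carries mass $P(\pi)$, but the infinite paths through it carry only $P(\pi)\sum_{\delta'\in\Delta^\infty(v)}P(\delta')$, which may be strictly smaller, so the live mass is not a priori accounted for by individual paths. This genuinely fails without countability -- in the two-element complete graph of the previous Proposition every infinite path has probability $0$ while $P(s\rightarrow^\infty)=1$. To close the gap I would exploit that a countable set of infinite branches of a tree is scattered (the branch set is closed in the path space, and a countable closed set has an isolated point): an isolated branch has a finite prefix beyond which it is the unique infinite continuation, and for such a ``unique continuation'' state $v$ the live mass reduces to a single decreasing product converging to that path's probability, giving $P(v\rightarrow^\infty)=\sum_{\delta'\in\Delta^\infty(v)}P(\delta')$ there. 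A Cantor--Bendixson induction on the rank of the branch set, propagating this equality back along the tree via the conservation law $R_n=R_{n+1}+N_{n+1}$, would then yield $\lim_n G_n=\sum_{\delta}P(\delta)$ and hence the claim; I anticipate that rendering this induction elementary, so as to avoid an explicit appeal to Cantor--Bendixson, is the delicate part.
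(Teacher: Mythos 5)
You should first be aware of what the paper's own proof is: it invokes the apparatus of Proposition~\ref{prop:prob-dist}, namely that $P(s\rightarrow^\infty)=\lim_n P^{(n)}(s\rightarrow^\infty)$, where $P^{(n)}(s\rightarrow^\infty)$ in equation~(\ref{eqpn2}) is exactly your survival mass $R_n$, and then simply \emph{asserts} that countability of $\Delta^\infty(s)$ gives $\lim_n R_n=\sum_{\delta\in\Delta^\infty(s)}P(\delta)$. So your step 1 (the conservation law $R_n=R_{n+1}+N_{n+1}$ and $P(s\rightarrow^\infty)=\lim_n R_n$) reconstructs the whole of what the paper actually argues; everything after that is you attempting to supply the argument the paper omits. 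Of that extra material, your lower bound is correct and complete (finite $F\subseteq\Delta^\infty(s)$, distinct prefixes for large $n$, each ending reducible, then $n\to\infty$ and $\sup_F$), and your weighted K\"onig lemma is sound: if $\lim_m R_m(v)>0$ then, since $R_m(v)=\sum_{v\rightarrow w}P(v\rightarrow w)R_{m-1}(w)$ and the one-step distribution is summable, dominated convergence yields a child with positive limiting survival mass, and iterating builds an infinite path.

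There are, however, two gaps. The minor one: ``the survival mass of a well-founded subsystem vanishes, i.e.\ $B_n\to 0$'' conflates a per-state fact with a global one. K\"onig gives $\lim_m R_m(v)=0$ for each \emph{individual} dead endpoint $v$, but $B_n$ sums over dead prefixes that are continually created out of live ones, so its convergence to $0$ needs an extra step. It is fixable with your own tools: write $R_{n+m}=\sum_{|\pi|=n}P(\pi)\,R_m(v_\pi)$, let $m\to\infty$ under the domination $\sum_\pi P(\pi)\le 1$, and use $P(v_\pi\rightarrow^\infty)=0$ at dead endpoints to get $P(s\rightarrow^\infty)\le G_n$ for every $n$, hence $\lim_n G_n=\lim_n R_n$.

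The major gap is the crux inequality $\lim_n G_n\le\sum_{\delta}P(\delta)$, which you only sketch, and the sketch as stated does not go through. Concretely: (i) the Cantor--Bendixson derivative of $\Delta^\infty(s)$ is not the branch set $\Delta^\infty(v)$ of any state or subsystem, so the statement being proved cannot be applied to it; the induction hypothesis would have to be generalized to arbitrary countable closed sets of branches with a relativized live mass; (ii) with rank defined as the least $\alpha$ whose iterated derivative is empty, a set of limit rank has a derivative of the \emph{same} rank (e.g.\ rank $\omega$, realized by placing sets of rank $n$ under distinct children of the root), so the induction does not terminate as formulated; (iii) each derivative step removes possibly infinitely many isolated branches whose isolating prefixes lie at unbounded depths, so a per-point ``unique continuation'' equality does not control their total mass --- one needs an antichain/dominated-convergence estimate. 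A way to finish that stays inside your toolkit and avoids ranks entirely: define the defect $\phi(\pi)=\lim_n(\hbox{live mass below }\pi)-\sum\{P(\delta)\mid\delta\in\Delta^\infty(s)\hbox{ through }\pi\}\ge 0$; it is harmonic, $\phi(\pi)=\sum_{\pi'}\phi(\pi')$ over children of $\pi$. If the positive-defect nodes below some node formed a single infinite path $\rho$, then $\phi$ would be constant along $\rho$, yet bounded by $P(\rho|_k)-P(\rho)\to 0$ (closedness of $\Delta^\infty(s)$ puts $\rho$ in $\Delta^\infty(s)$), so $\phi$ would vanish there; hence positive defect branches somewhere below every positive node, and the resulting embedded binary tree gives continuum many branches inside the closed set $\Delta^\infty(s)$, contradicting countability. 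This, or an equivalent covering argument (enumerate $\Delta^\infty(s)=\{\delta_i\}$, cover by cylinders with $P(\delta_i|_{n_i})\le P(\delta_i)+\epsilon 2^{-i}$, and kill the escaping mass by weighted K\"onig), is the genuine content that both your sketch and the paper leave unproved.
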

\begin{proof}
See Appendix. 
\end{proof}
We can now
define \emph{probabilistic} and \emph{almost-surely} (abbreviated ``a-s.'') versions of important notions for derivation systems.
A system is
\begin{itemize}
\item \emph{almost-surely convergent} if for all $s_1 \leftarrow^{*} s \rightarrow^{*} s_2$ there is a normal form $t\in \NF{R}$ such that $s_1 \rightarrow^{*} t \leftarrow^{*} s_2$ and $P(s_1 \rightarrow ^{*} t)  = P(s_2 \rightarrow ^{*} t)= 1$,
\item \emph{locally almost-surely convergent} if for all $s_1 \leftarrow s \rightarrow s_2$ there is a $t\in \NF{R}$ such that $s_1 \rightarrow^{*} t \leftarrow^{*} s_2$ and $P(s_1 \rightarrow ^{*} t)  = P(s_2 \rightarrow ^{*} t)= 1$,
\item \emph{almost-surely terminating}\footnote{Almost-sure termination is named \emph{probabilistic termination} elsewhere, e.g.,~\cite{Sneyers2012,Fruhwirth2002}.} iff every element $s$ has $P(s\rightarrow^\infty) = 0$, and 
\item \emph{probabilistically normalizing} iff every element $s$ has a normal form $t$ such that
$P(s \rightarrow^* t) > 0$. 
\end{itemize}
We have deliberately omitted almost-sure confluence and local confluence~\cite{DBLP:conf/rta/BournezK02}, since these require a more advanced measure in order to define the probability of visiting a perhaps reducible element. 
%
\begin{table}
\begin{center}
\begin{tabular}{l@{\;\;}|@{\;\;}c@{\;}@{\;\;}c@{\;}@{\;\;}c@{\;}@{\;\;}c@{\;}@{\;\;}c@{\;}}
& {{{(a)}}} & {{{(b)}}}& {{{(c)}}}& {{{(d)}}}& {{{(d$'$)}}}\\
\hline
Loc.~confl. & +& +& +&+&+\\
Confl. & +& +& --& +&+\\
Term. & --& --& --& --&--\\
\hline
A-s.~loc.~conv. & --& + & --&--&+\\
A-s.~conv. & -- & + & --&--&+ \\
A-s.~term. & -- & +& +& --&+\\
\hline
\end{tabular}
\end{center}
\caption{A property overview of the systems \protect\subref{fig:RP}--\protect\subref{fig:c} in Figure \ref{fig:examples-abc} and (d$'$) with same ARS as (d), but with all probabilities replaced by $1/2$.}
\label{tab:ex:examples-abcc}
\end{table}
\begin{example}\label{ex:three-sample-pars}
The four probabilistic systems in Figure~\ref{fig:examples-abc} demonstrate these properties. 
\noindent We notice that \protect\subref{fig:a}--\protect\subref{fig:c} are normalizing in $\{a\}$, $\{a,b\}$ and $\{a\}$, respectively. Furthermore, they are all non-terminating: system \protect\subref{fig:a} and \protect\subref{fig:b} are {a-s.~terminating}, which is neither the case for \protect\subref{fig:RP} nor \protect\subref{fig:c}; for element $0$ in system \protect\subref{fig:c} we have $P(0\rightarrow^\infty) = \prod_{i=1}^{\infty} (1 - (1/4)^i) \approx 0.6885>0$.\footnote{Verified by Mathematica. The exact result is $\left(\frac{1}{4};\frac{1}{4}\right)_{\infty }$;
see~\cite{QPochhammer} for the definition of this notation.}
Table \ref{tab:ex:examples-abcc} summarizes their properties of (almost-sure) (local) confluence; 
(d$'$) refers to a  PARS with the same underlying ARS as \protect\subref{fig:c} and with all probabilities $=1/2$. 

System~\protect\subref{fig:b} is a probabilistic version of a classical example~\cite{DBLP:journals/jsyml/Hindley74,Huet1980} which demonstrates that termination (and not only a-s.~termination) is required in order for local confluence to imply confluence. The difference between system (d) and (d$'$) emphasizes that the choice of probabilities
do matter for whether or not different probabilistic properties hold. For any element
$s$ in (d$'$), the probability of reaching the normal form $a$ is $1/2+1/2^2+1/2^3+\cdots=1$.
\end{example}

%
%

\section{Properties of Probabilistic Abstract Reduction Systems}\label{sec:props-of-PARS}
With a focus on almost-sure convergence, we consider now relevant relationships between the properties of probabilistic and their underlying non-probabilistic systems.
%
%
Lemmas~\ref{lem:norm=probnorm} and \ref{lem:tem2probterm}, below, have previously been suggested by~\cite{DBLP:conf/rta/BournezK02} without proofs, and we have chosen to include them as well as their proofs to provide a better understanding of the nature of almost-sure convergence. The most important properties are
summarized as follows.
For any PARS $R^P$: 
\begin{itemize}
\item $R^P$ is {normalizing} if and only if  it is {probabilistically normalizing} (Lemma~\ref{lem:norm=probnorm}),
\item  if $R^P$ is almost-surely terminating then it is {normalizing} (Lemma~\ref{lem:probterm2norm}),
\item if $R^P$ is {terminating} then it is almost-surely terminating (Lemma~\ref{lem:tem2probterm}),
\item $R^P$ is almost-surely terminating and confluent, 
 if and only if it is almost-surely convergent (Theorem~\ref{thm:probterm+confl<=>probconfl}). 
\end{itemize}
\par\noindent
%
%
The following inductive characterization of the probabilities for reaching a given normal form is useful
for the proofs that follow.
\begin{proposition}\label{prop:probsucc*succprob=prob}
For any reducible element $s$,
the following holds.
$$\sum_{t \in \NF{R}} P(s \rightarrow^* t) = \sum_{s \rightarrow s'} \biggl(P(s \rightarrow s')\times\sum_{t \in \NF{R}}P(s' \rightarrow^* t)\biggr) $$
\end{proposition}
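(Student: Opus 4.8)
The plan is to prove this by reorganizing the defining sum for $\sum_{t\in\NF{R}}P(s\rightarrow^*t)$ according to the first reduction step taken from $s$. Since $s$ is reducible, every path $\delta\in\Delta(s,t)$ of positive length begins with some initial step $s\rightarrow s'$, followed by a path in $\Delta(s',t)$; moreover, because $s$ itself is not a normal form, the empty path never occurs in $\Delta(s,t)$, so \emph{every} path in $\Delta(s,t)$ has this form. The key structural observation is that the map sending a path $s\rightarrow s'\rightarrow\cdots\rightarrow t$ to the pair $(s', \; s'\rightarrow\cdots\rightarrow t)$ is a bijection between $\Delta(s,t)$ and $\bigsqcup_{s\rightarrow s'}\Delta(s',t)$.

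First I would expand the left-hand side using the definition of $P(s\rightarrow^*t)$ as $\sum_{\delta\in\Delta(s,t)}P(\delta)$ and then double-sum over $t\in\NF{R}$. Next I would use the multiplicativity of path probabilities established in Definition~\ref{def:pars}: for a path $\delta = (s\rightarrow s'\rightarrow\cdots\rightarrow t)$ with continuation $\delta'=(s'\rightarrow\cdots\rightarrow t)$, we have $P(\delta)=P(s\rightarrow s')\cdot P(\delta')$. Substituting the bijection above, the triple sum over $t$, over first steps $s\rightarrow s'$, and over continuations $\delta'\in\Delta(s',t)$ factors, and I would pull the factor $P(s\rightarrow s')$ out of the inner sums over $\delta'$ and $t$, leaving exactly $\sum_{s\rightarrow s'}\bigl(P(s\rightarrow s')\times\sum_{t\in\NF{R}}P(s'\rightarrow^*t)\bigr)$, which is the right-hand side.

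The main obstacle is justifying the interchange and regrouping of these (possibly countably infinite) sums. Because $s$ may have countably infinite branching and each $\Delta(s',t)$ may be countably infinite, rearranging the order of summation requires that the triple sum be absolutely convergent, or at least that all terms be nonnegative so that Tonelli-style reordering of nonnegative series applies. Fortunately all summands here are probabilities, hence nonnegative reals, so the rearrangement of a nonnegative (countable) family is unconditionally valid regardless of convergence — the common value in $[0,\infty]$ is independent of summation order. I would state this nonnegativity explicitly as the justification for commuting the sums. A minor point worth noting is that the sum $\sum_{t\in\NF{R}}$ ranges over all normal forms rather than only over $\NF{R}(s)$, but paths in $\Delta(s,t)$ exist only when $t\in\NF{R}(s)$, so terms with $t\notin\NF{R}(s)$ contribute zero on both sides and the identity is unaffected.
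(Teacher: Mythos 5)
Your proof is correct and takes essentially the same approach as the paper's own proof, which likewise decomposes each path from $s$ to a normal form by its first reduction step and invokes the multiplicativity of path probabilities from Definition~\ref{def:pars}. Your version merely spells out what the paper leaves implicit, namely the bijection between $\Delta(s,t)$ and the disjoint union of the sets $\Delta(s',t)$ over first steps $s \rightarrow s'$, and the nonnegativity (Tonelli-style) justification for reordering the countable sums.
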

\begin{proof}
Any path from $s$ to a normal form $t$ will have the form $s\rightarrow s'\rightarrow\cdots\rightarrow t$, for some
direct successor $s'$ of $s$. The other way round, any normal form for a direct successor $s'$ of $s$ will also be
a normal form of $s$. With this observation, the proposition follows directly from Definition~\ref{def:pars}
(prob.\ of path).
\end{proof}

\begin{lemma}[{\cite{DBLP:conf/rta/BournezK02}}]\label{lem:norm=probnorm}
A PARS is {normalizing} if and only if it is {probabilistically normalizing}.
\end{lemma}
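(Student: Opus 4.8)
The plan is to prove the biconditional by establishing each implication separately, with both directions hinging on a single bridge observation: for any pair $s$ and $t\in\NF{R}(s)$, the strict inequality $P(s\rightarrow^* t)>0$ holds if and only if the path set $\Delta(s,t)$ is nonempty, i.e.\ if and only if $s\rightarrowstar t$. Once this equivalence is in place, both directions become essentially definitional, so I would state and justify it first and then dispatch the two implications quickly.

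For the bridge, note that $P(s\rightarrow^* t)=\sum_{\delta\in\Delta(s,t)}P(\delta)$ by Definition~\ref{def:pars}. Proposition~\ref{prop:prob-dist} guarantees that every finite path $\pi$ satisfies $P(\pi)>0$ (this is where the assumption that $P(s\rightarrow t)>0$ exactly when $s\rightarrow t$ is used, since the path probability is the product of its one-step probabilities). Hence if $\Delta(s,t)$ contains at least one path, the sum has a strictly positive summand and $P(s\rightarrow^* t)>0$; conversely, if $\Delta(s,t)$ is empty the sum is $0$. This gives the desired equivalence between reachability of $t$ from $s$ and positivity of $P(s\rightarrow^* t)$.

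With the bridge established, I would argue as follows. For \emph{normalizing} $\Rightarrow$ \emph{probabilistically normalizing}, take any $s$; by normalization there is a normal form $t$ with $s\rightarrowstar t$, so $\Delta(s,t)\neq\emptyset$ and therefore $P(s\rightarrow^* t)>0$, exhibiting the required normal form reached with positive probability. For the converse, given any $s$ there is by hypothesis a normal form $t$ with $P(s\rightarrow^* t)>0$; by the bridge this forces $\Delta(s,t)\neq\emptyset$, hence $s\rightarrowstar t$, so $s$ has a normal form in the ARS sense. I expect no real obstacle here: the entire argument reduces to the observation that a countable sum of nonnegative terms is positive exactly when it has a nonzero summand, combined with the already-proven positivity of individual finite-path probabilities from Proposition~\ref{prop:prob-dist}. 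The only point requiring care is to invoke that positivity rather than take it for granted, since it is precisely what rules out the degenerate situation of a reachable normal form being assigned probability zero.
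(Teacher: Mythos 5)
Your proposal is correct and follows essentially the same route as the paper's own proof: normalization implies probabilistic normalization because any witnessing path has strictly positive probability (the paper invokes this directly from Definition~\ref{def:pars}, you via Proposition~\ref{prop:prob-dist}), and the converse is essentially definitional since $P(s \rightarrow^* t)$ is only defined for $t\in\NF{R}(s)$, i.e., for normal forms actually reachable from $s$. Your explicit ``bridge'' observation is a more careful spelling-out of what the paper compresses into two sentences, but it is the same argument.
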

\begin{proof}
Every element $s$ in a normalizing PARS has a normal form $t$ such that $s \rightarrow^* t$ and by definition of PARS, $P(s \rightarrow^* t) > 0$, which makes it probabilistically normalizing. 
The other way round, the definition of probabilistic normalizing includes normalization.
\end{proof}
Prob. normalization differs from the other properties in nature (requiring probability $>0$ instead of $=1$), and is the only one which is equivalent to its non-probabilistic counterpart. Thus, the existing results on proving and disproving normalization can be used directly to determine probabilistic normalization. 
The following lemma is also a consequence of Proposition~7, parts 3 and 5, of {\cite{DBLP:conf/rta/BournezK02}}.
%
%

\noindent
\begin{lemma}\label{lem:probterm2norm}
If a PARS is almost-surely terminating then it is {normalizing}.
\end{lemma}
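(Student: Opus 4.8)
The plan is to unwind the two definitions and reduce the claim to a statement about the (non-)emptiness of $\NF{R}(s)$. Recall that the PARS is \emph{normalizing} precisely when every element $s$ reaches some normal form, i.e.\ when $\NF{R}(s)\neq\emptyset$ for all $s$, whereas \emph{almost-sure termination} says $P(s\rightarrow^\infty)=0$ for all $s$. So it suffices to show, for a fixed but arbitrary $s$, that $P(s\rightarrow^\infty)=0$ implies $\NF{R}(s)\neq\emptyset$.

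I would argue by contraposition. Suppose $s$ has no reachable normal form, so $\NF{R}(s)=\emptyset$. Then the sum $\sum_{t\in\NF{R}(s)}P(s\rightarrow^* t)$ ranges over the empty index set and hence equals $0$. Substituting into the defining equation $P(s\rightarrow^\infty)=1-\sum_{t\in\NF{R}(s)}P(s\rightarrow^* t)$ from Definition~\ref{def:pars} gives $P(s\rightarrow^\infty)=1\neq 0$, so the system is not almost-surely terminating. Taking the contrapositive, almost-sure termination forces $\NF{R}(s)\neq\emptyset$ for every $s$, which is exactly normalization.

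The argument is essentially a matter of reading off the definitions, so I do not expect a genuine obstacle. The one point I would be careful to state explicitly is the convention that the empty sum is $0$ (equivalently, that $P(s\rightarrow^* t)>0$ for each $t\in\NF{R}(s)$, so a vanishing total probability of reaching a normal form is possible only when no normal form is reachable at all); this is what links the purely probabilistic hypothesis to the combinatorial conclusion about $\NF{R}(s)$.
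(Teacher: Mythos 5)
Your proof is correct, and at its core it rests on the same observation as the paper's: under almost-sure termination the total probability of reaching a normal form is $1$, which is impossible if $\NF{R}(s)=\emptyset$ since an empty sum is $0$. The packaging, however, is genuinely different. The paper argues forward: it cites Proposition~\ref{prop:prob-dist} for the identity $\sum_{t\in\NF{R}(s)}P(s\rightarrow^*t)+P(s\rightarrow^\infty)=1$, concludes that the sum equals $1$, hence some normal form $t$ has $P(s\rightarrow^*t)>0$ (i.e.\ the system is \emph{probabilistically normalizing}), and then invokes the equivalence of Lemma~\ref{lem:norm=probnorm} to pass from probabilistic normalization to normalization. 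You instead argue by contraposition directly from Definition~\ref{def:pars}: if $\NF{R}(s)=\emptyset$, the defining sum is empty, so $P(s\rightarrow^\infty)=1-0=1\neq 0$. Your route is more elementary and self-contained --- the identity you use is the literal definition of $P(s\rightarrow^\infty)$, so you need neither Proposition~\ref{prop:prob-dist} (whose proof is the lengthy appendix argument) nor Lemma~\ref{lem:norm=probnorm}, only the convention that an empty sum is $0$, which you correctly flag as the crux. What the paper's detour buys is the explicit intermediate fact that almost-sure termination implies probabilistic normalization, tying the lemma into the paper's web of equivalences; what yours buys is independence from those auxiliary results.
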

\begin{proof}
For every element $s$ in a almost-surely terminating system,
Proposition~\ref{prop:prob-dist} gives that $\sum_{t \in \NF{R}} P(s \rightarrow^* t)=1$, and hence $s$ has at least one normal form $t$ such that
$P(s \rightarrow^* t) > 0$. By Lemma \ref{lem:norm=probnorm}, the system is also normalizing. 
\end{proof}
The opposite is not the case, as demonstrated by system \protect\subref{fig:c} in Figure \ref{fig:examples-abc}; every element has a normal form, but the system is not almost-surely terminating.

%
%

\begin{lemma}[{\cite{DBLP:conf/rta/BournezK02}}]\label{lem:tem2probterm}
If a PARS is {terminating} then it is almost-surely terminating.
\end{lemma}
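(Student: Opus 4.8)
The plan is to establish the equivalent statement that for every element $s$ all the probability mass lands on reachable normal forms, i.e.\ $\sum_{t\in\NF{R}(s)}P(s\rightarrow^* t)=1$; by Proposition~\ref{prop:prob-dist} this is exactly $P(s\rightarrow^\infty)=0$, which is the definition of almost-sure termination. Since termination means that $\rightarrow$ admits no infinite path, the reduction relation is well-founded, so I would prove the claim by well-founded induction over $\rightarrow$. Because every normal form reachable from a direct successor $s'$ of $s$ is also reachable from $s$ (and conversely every reachable normal form of $s$ is reached through some successor), I may freely replace $\NF{R}(s)$ by $\NF{R}$ in these sums and invoke Proposition~\ref{prop:probsucc*succprob=prob} directly.

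For the base case, let $s$ be a normal form. Then $\NF{R}(s)=\{s\}$ and, as noted right after Definition~\ref{def:pars}, $P(s\rightarrow^* s)=1$, so the sum equals $1$. For the inductive step, let $s$ be reducible and assume as induction hypothesis that $\sum_{t\in\NF{R}}P(s'\rightarrow^* t)=1$ for every direct successor $s'$ (each of which is strictly below $s$ in the well-founded order). Applying Proposition~\ref{prop:probsucc*succprob=prob} and substituting the induction hypothesis gives
\[
\sum_{t\in\NF{R}}P(s\rightarrow^* t)=\sum_{s\rightarrow s'}\Bigl(P(s\rightarrow s')\times 1\Bigr)=\sum_{s\rightarrow s'}P(s\rightarrow s')=1,
\]
where the last equality is the requirement from Definition~\ref{def:pars} that $P(s\rightarrow\cdot)$ is a probability distribution. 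This closes the induction and hence the proof.

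The only point requiring care is that the paper permits countable, hence possibly infinite, branching, so both the induction and the sum over successors may involve infinitely many terms. Well-founded induction is nevertheless valid here: termination rules out infinite descending chains irrespective of the branching degree, so in particular I need no K\"onig-type finiteness argument and no uniform bound on the lengths of paths from $s$. The infinite sum $\sum_{s\rightarrow s'}P(s\rightarrow s')$ then converges to $1$ precisely by the distribution axiom. I expect this interplay between infinite branching and the legitimacy of the inductive step to be the main thing to justify cleanly; the remainder is a direct unfolding of the definitions together with Proposition~\ref{prop:probsucc*succprob=prob}.
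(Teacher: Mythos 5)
Your proof is correct, but it takes a genuinely different route from the paper's. The paper's own proof is a two-liner: termination means $\Delta^\infty(s)=\emptyset$ for every $s$, the empty set is trivially countable, so Proposition~\ref{prop:countable-infinity-summation} applies and gives $P(s\rightarrow^\infty)=\sum_{\delta\in\emptyset}P(\delta)=0$. You instead prove $\sum_{t\in\NF{R}(s)}P(s\rightarrow^* t)=1$ by well-founded induction on $\rightarrow$ (legitimate, as you note, because absence of infinite paths is exactly well-foundedness of the successor relation, with no finite-branching assumption needed), closing the inductive step with Proposition~\ref{prop:probsucc*succprob=prob} and the distribution axiom. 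The trade-off: the paper's argument is shorter and reuses machinery it has already built, whereas yours never touches probabilities of infinite paths at all --- it stays entirely within finite-path sums, which is arguably more self-contained, since Proposition~\ref{prop:countable-infinity-summation} is the one place where the paper assigns probability to infinite objects. Your inductive step is, incidentally, essentially the same computation the paper performs inside its proof of Lemma~\ref{lem:problocalconfl2probterm}, where local almost-sure convergence supplies the ``probability one at each successor'' premise that your induction hypothesis supplies here; so your argument fits naturally into the paper's toolkit even though it is not the proof the paper chose. One cosmetic remark: the equivalence of $P(s\rightarrow^\infty)=0$ and $\sum_{t\in\NF{R}(s)}P(s\rightarrow^* t)=1$ is immediate from the definition of $P(s\rightarrow^\infty)$ in Definition~\ref{def:pars}; citing Proposition~\ref{prop:prob-dist} for it is harmless but unnecessary.
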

\begin{proof}
In a terminating PARS, $\Delta^\infty(s)=\emptyset$ for any element $s$.
By Proposition~\ref{prop:countable-infinity-summation} we have $P(s\rightarrow^\infty) =0$.
\end{proof}
The opposite is not the case, as demonstrated by systems~\subref{fig:a}--\subref{fig:c} in Figure~\ref{fig:examples-abc}.
The following theorem is a central tool for proving almost-sure convergence. 

\begin{theorem}\label{thm:probterm+confl<=>probconfl}
A PARS is almost-surely terminating and confluent 
 if and only if it is almost-surely convergent.
\end{theorem}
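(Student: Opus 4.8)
The plan is to prove the two implications separately, using the probability-distribution identity of Proposition~\ref{prop:prob-dist} as the bridge between the purely combinatorial side (confluence) and the measure-theoretic side (the $=1$ conditions of almost-sure convergence).

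For the forward direction, I assume the PARS is almost-surely terminating and confluent and fix a diagram $s_1 \leftarrow^* s \rightarrow^* s_2$. Confluence yields a common reduct $u$ with $s_1 \rightarrow^* u \leftarrow^* s_2$. Since almost-sure termination implies normalization (Lemma~\ref{lem:probterm2norm}), $u$ reduces to some normal form $t$, so $s_1 \rightarrow^* t \leftarrow^* s_2$ with $t \in \NF{R}$. The key step is the standard ARS observation that confluence forces $t$ to be the \emph{only} normal form reachable from $s_1$: if $s_1 \rightarrow^* t'$ with $t' \in \NF{R}$, confluence gives a common reduct of $t$ and $t'$, and since both are normal forms that reduct must equal each, whence $t' = t$. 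Thus $\NF{R}(s_1) = \{t\}$, and symmetrically $\NF{R}(s_2) = \{t\}$.

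With $\NF{R}(s_1)$ a singleton I invoke Proposition~\ref{prop:prob-dist}, which gives $\sum_{t' \in \NF{R}(s_1)} P(s_1 \rightarrow^* t') + P(s_1 \rightarrow^\infty) = 1$; the sum collapses to the single term $P(s_1 \rightarrow^* t)$, and almost-sure termination supplies $P(s_1 \rightarrow^\infty) = 0$, so $P(s_1 \rightarrow^* t) = 1$. The identical argument for $s_2$ gives $P(s_2 \rightarrow^* t) = 1$, which establishes almost-sure convergence. For the converse, assuming almost-sure convergence, confluence is immediate, since its defining diagram is exactly the common reduct already produced (a normal form, in fact) by almost-sure convergence, with the probability conditions simply discarded. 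For almost-sure termination I fix any element $s$ and apply almost-sure convergence to the trivial diagram $s \leftarrow^* s \rightarrow^* s$ (taking $s_1 = s_2 = s$), obtaining a normal form $t$ with $P(s \rightarrow^* t) = 1$; since $t \in \NF{R}(s)$ and all summands in Proposition~\ref{prop:prob-dist} are non-negative, this forces $P(s \rightarrow^\infty) = 0$.

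I expect the main obstacle to be the uniqueness-of-normal-form step in the forward direction: it is the one genuinely non-clerical point, as it is precisely what turns the potentially multi-term sum of Proposition~\ref{prop:prob-dist} into a single term that can be pinned to $1$. Everything else is bookkeeping with the probability identity, and I would take care only to cite Proposition~\ref{prop:prob-dist} for the index set $\NF{R}(s)$ (reachable normal forms) rather than all of $\NF{R}$, so that the singleton collapse is literally the sum appearing in that proposition.
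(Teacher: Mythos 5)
Your proof is correct, and both implications go through as written. The forward direction is essentially the paper's own argument in expanded form: the paper factors it through Lemma~\ref{lem:probterm2norm} (almost-sure termination implies normalization) and Proposition~\ref{prelem:norm+conf2uniqNF} (a normalizing system is confluent if and only if every element has a unique normal form), whereas you re-derive the uniqueness of the reachable normal form inline from confluence; the mathematical content is the same, and your closing remark about indexing the sum by $\NF{R}(s_1)$ rather than $\NF{R}$ is exactly the right point of care. Where you genuinely diverge is the backward direction, specifically almost-sure termination. The paper obtains it from Lemma~\ref{lem:problocalconfl2probterm}, which proves the stronger statement that \emph{local} almost-sure convergence already implies almost-sure termination; that proof requires the inductive characterization of Proposition~\ref{prop:probsucc*succprob=prob} and an analysis of the direct successors of a reducible element, because a local diagram $s_1 \leftarrow s \rightarrow s_2$ cannot be instantiated reflexively. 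You instead exploit the full strength of almost-sure convergence by applying it to the trivial diagram $s \leftarrow^* s \rightarrow^* s$, which immediately yields a normal form $t$ with $P(s \rightarrow^* t)=1$, after which Proposition~\ref{prop:prob-dist} forces $P(s \rightarrow^\infty)=0$. Your route is shorter and avoids the successor analysis entirely; the paper's route costs more but yields Lemma~\ref{lem:problocalconfl2probterm} as a result of independent interest (local almost-sure convergence is one of the properties catalogued in Table~\ref{tab:ex:examples-abcc}), and the theorem's ``if'' direction then follows as a one-line corollary via Proposition~\ref{lem:probconfl2probterm} and Lemma~\ref{lem:probconfl2confl}.
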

Thus, to prove almost-sure convergence of a given PARS, one may use the
methods of~\cite{Fioriti2015,Bournez05} to prove almost-sure termination and
prove classical confluence -- referring to Newman's lemma 
(cf.~our discussion in the Introduction), or using the method of mapping the system
into another system, already known to be confluent, as described in Section~\ref{sec:transformation}, below.

\begin{proof}[Theorem~\ref{thm:probterm+confl<=>probconfl}]
We split the proof into smaller parts, referring to properties that are shown below: 
``if'': by Prop. \ref{lem:probconfl2probterm} and Lemma \ref{lem:probconfl2confl}. ``only if'': by Lemma \ref{lem:probterm+conf2probconfl}.
\end{proof}

\begin{lemma}\label{lem:problocalconfl2probterm}
A PARS is almost-surely terminating if it is locally almost-surely convergent. 
\end{lemma}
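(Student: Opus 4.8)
The plan is to prove the implication directly: assuming local almost-sure convergence, I would show that every element $s$ satisfies $P(s \rightarrow^\infty) = 0$, which is exactly almost-sure termination. By Proposition~\ref{prop:prob-dist} this is equivalent to $\sum_{t \in \NF{R}(s)} P(s \rightarrow^* t) = 1$, so everything reduces to establishing that each element reaches a normal form with total probability one. I would split on whether $s$ is a normal form or reducible; the normal-form case is immediate, since then $\Delta^\infty(s) = \emptyset$ and $P(s \rightarrow^* s) = 1$, so all the content lies in the reducible case.

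The first real step is to extract a statement about successors from local almost-sure convergence. For a reducible $s$ and any successor $s'$ (i.e.\ $s \rightarrow s'$), I would instantiate the defining condition on the diagonal, taking $s_1 = s_2 = s'$ in the pattern $s_1 \leftarrow s \rightarrow s_2$; this is permitted because the definition quantifies over all such pairs without requiring $s_1 \ne s_2$. The condition then supplies a normal form $t$ with $P(s' \rightarrow^* t) = 1$. Since $P(s' \rightarrow^* t) \le \sum_{u \in \NF{R}(s')} P(s' \rightarrow^* u)$ and the latter sum is at most $1$ by Proposition~\ref{prop:prob-dist}, I conclude $\sum_{u \in \NF{R}} P(s' \rightarrow^* u) = 1$, equivalently $P(s' \rightarrow^\infty) = 0$, for every successor $s'$ of every reducible element.

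The second step lifts this from the successors of $s$ to $s$ itself by means of the inductive characterization in Proposition~\ref{prop:probsucc*succprob=prob}. Plugging the just-established $\sum_{u \in \NF{R}} P(s' \rightarrow^* u) = 1$ for each successor $s'$ into
$$\sum_{u \in \NF{R}} P(s \rightarrow^* u) = \sum_{s \rightarrow s'} \Bigl( P(s \rightarrow s') \times \sum_{u \in \NF{R}} P(s' \rightarrow^* u)\Bigr)$$
collapses the right-hand side to $\sum_{s \rightarrow s'} P(s \rightarrow s')$, which equals $1$ because $P(s \rightarrow \cdot)$ is a probability distribution over the reductions from $s$. Hence $\sum_{u \in \NF{R}} P(s \rightarrow^* u) = 1$ and $P(s \rightarrow^\infty) = 0$, completing the reducible case and the proof.

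The argument is short, and I expect no serious obstacle; the one point requiring care is the diagonal instantiation $s_1 = s_2$, which is what turns the two-sided convergence hypothesis into the one-sided ``each successor normalizes with probability one'' fact that drives the induction. The remaining work is purely bookkeeping with Propositions~\ref{prop:prob-dist} and~\ref{prop:probsucc*succprob=prob}.
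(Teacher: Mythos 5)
Your proof is correct and takes essentially the same route as the paper's: split on whether $s$ is a normal form, use local almost-sure convergence to get probability-one normal forms for the successors of a reducible $s$, and collapse the sum via Proposition~\ref{prop:probsucc*succprob=prob} together with $\sum_{s\rightarrow s'}P(s\rightarrow s')=1$. The only difference is cosmetic: the paper instantiates the hypothesis on pairs of successors to obtain a single common normal form $t_s$, whereas your diagonal instantiation $s_1=s_2=s'$ shows that all one needs is $\sum_{u\in\NF{R}(s')}P(s'\rightarrow^* u)=1$ for each successor separately — a slight streamlining, not a different argument.
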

\begin{proof}
Let $R^P$ be a PARS which is locally almost-surely convergent, and consider an arbitrary element $s$.
We must show $P(s\rightarrow^\infty)=0$ or, equivalently, $\sum_{t\in \NF{R}} P(s\rightarrow^* t)=1$.

When $s$ is a normal form, we have $P(s\rightarrow^* s) = 1$ and thus the desired property.
Assume, now, $s$ is not a normal form.
This means that $s$ has at least one direct successor; for any two (perhaps identical) direct successors $s', s''$,
local almost-sure convergence implies a unique normal form $t_{s', s''}$ of $s'$ as well as of $s''$
with $P(s' \rightarrow^* t_{s', s''}) = P(s'' \rightarrow^* t_{s', s''}) = 1$.
Obviously, this normal form is the same for all such successors and thus a unique normal form of $s$, so let us call it $t_s$.
We can now use Proposition~\ref{prop:probsucc*succprob=prob} as follows.
$$\sum_{t \in \NF{R}}\!\! P(s \shortrightarrow^* t) = 
P(s \shortrightarrow^* t_s) =
\sum_{\mathclap{s \rightarrow s'}} \biggl(\! P(s \shortrightarrow s')\cdot P(s' \shortrightarrow^* t_s)\!\biggr)
= \sum_{\mathclap{s \rightarrow s'}} P(s \shortrightarrow s') = 1.
$$
This finishes the proof.
%
%
%
\end{proof}
Since almost-sure convergence implies local almost-sure convergence, we obtain the weaker version of the above lemma.
\begin{proposition}\label{lem:probconfl2probterm}
A PARS is almost-surely terminating if it is almost-surely convergent.
\end{proposition}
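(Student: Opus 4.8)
The plan is to derive Proposition~\ref{lem:probconfl2probterm} as an immediate corollary of Lemma~\ref{lem:problocalconfl2probterm}, which has just been established. The key observation is a purely logical implication between the two probabilistic convergence notions: \emph{almost-sure convergence implies local almost-sure convergence}. Once this implication is in hand, the proposition follows by composition, since Lemma~\ref{lem:problocalconfl2probterm} already tells us that local almost-sure convergence implies almost-sure termination.

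So the first and only substantive step is to verify that implication. Comparing the two definitions, almost-sure convergence quantifies over all peaks $s_1 \leftarrow^{*} s \rightarrow^{*} s_2$ reached by arbitrary (reflexive-transitive) reductions, demanding a common normal form $t$ with $P(s_1 \rightarrow^{*} t) = P(s_2 \rightarrow^{*} t) = 1$; local almost-sure convergence demands the very same conclusion but only for the special peaks $s_1 \leftarrow s \rightarrow s_2$ arising from \emph{single} reduction steps. Since every single-step peak $s_1 \leftarrow s \rightarrow s_2$ is in particular a peak of the form $s_1 \leftarrow^{*} s \rightarrow^{*} s_2$ (each single step being a length-one instance of $\rightarrow^{*}$), the defining condition of almost-sure convergence applies verbatim to it and yields exactly the witness $t$ required by local almost-sure convergence. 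Hence any almost-surely convergent PARS is locally almost-surely convergent.

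Chaining the two facts completes the argument: given an almost-surely convergent PARS, it is locally almost-surely convergent by the above, and therefore almost-surely terminating by Lemma~\ref{lem:problocalconfl2probterm}. There is essentially no obstacle here — the only thing to be careful about is correctly reading off the two definitions and noting that the universal quantifier in almost-sure convergence ranges over a superset of the peaks considered in the local version, so the stronger property specializes to the weaker one. This is precisely the remark the authors make in the one-line sentence preceding the proposition (``Since almost-sure convergence implies local almost-sure convergence''), so I expect the proof to be a single sentence invoking that inclusion together with the preceding lemma.
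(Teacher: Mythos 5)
Your proposal is correct and matches the paper exactly: the paper also obtains this proposition as an immediate consequence of Lemma~\ref{lem:problocalconfl2probterm}, via the observation (stated in the sentence preceding the proposition) that almost-sure convergence specializes to local almost-sure convergence because every single-step peak $s_1 \leftarrow s \rightarrow s_2$ is an instance of $s_1 \leftarrow^{*} s \rightarrow^{*} s_2$. Your verification of that specialization is the only content needed, and you identified it precisely.
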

%

\noindent The following property for (P)ARS, is used in the proof of Lemma~\ref{lem:probterm+conf2probconfl},
below.
\begin{proposition}
 \label{prelem:norm+conf2uniqNF} 
 A normalizing system is confluent if and only if every element has a unique normal form.
\end{proposition}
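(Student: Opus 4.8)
The statement to prove is Proposition~\ref{prelem:norm+conf2uniqNF}: a normalizing system is confluent if and only if every element has a unique normal form.

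\medskip

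The plan is to prove the two implications separately, treating this as a standard equivalence argument about abstract reduction systems where the normalizing hypothesis guarantees that every element has \emph{at least} one normal form, so the content lies in controlling \emph{how many}.

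\medskip

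For the ``only if'' direction, I would assume the system is normalizing and confluent, and fix an arbitrary element $s$. By normalization, $\NF{R}(s)$ is nonempty. Suppose for contradiction that $s$ has two normal forms $t_1, t_2 \in \NF{R}(s)$, so $t_1 \leftarrow^{*} s \rightarrow^{*} t_2$. Confluence yields an element $u$ with $t_1 \rightarrow^{*} u \leftarrow^{*} t_2$. Since $t_1$ and $t_2$ are normal forms, they are irreducible, so the only reduction paths leaving them are the empty paths; hence $t_1 = u = t_2$. This shows the normal form of $s$ is unique.

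\medskip

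For the ``if'' direction, I would assume the system is normalizing and that every element has a unique normal form, and aim to establish confluence. Take $s_1 \leftarrow^{*} s \rightarrow^{*} s_2$. I would use normalization to pick a normal form of $s$; the key observation is that any normal form reachable from $s_1$ or from $s_2$ is also a normal form reachable from $s$ (by composing the paths $s \rightarrow^{*} s_1 \rightarrow^{*} t$), and uniqueness then forces all these to coincide with the single normal form $t$ of $s$. Concretely, let $t$ be the unique normal form of $s$; normalization applied to $s_1$ gives a normal form $t_1$ with $s_1 \rightarrow^{*} t_1$, whence $s \rightarrow^{*} t_1$, so $t_1$ is a normal form of $s$ and by uniqueness $t_1 = t$; symmetrically the normal form of $s_2$ equals $t$. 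Thus $s_1 \rightarrow^{*} t \leftarrow^{*} s_2$, giving the required common reduct.

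\medskip

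I do not expect any serious obstacle here: both directions are short, and the only subtlety worth stating carefully is that normal forms are irreducible, so a reduction path out of a normal form must be the empty path (as noted in the definition of \emph{path}, length~$0$). The normalizing hypothesis is genuinely needed in the ``if'' direction to guarantee that $s_1$ and $s_2$ reach \emph{some} normal form, which is what lets uniqueness bite; without it, the implication could fail.
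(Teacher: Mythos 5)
Your proof is correct and takes essentially the same route as the paper: in the ``if'' direction you use exactly the paper's key facts --- that any normal form reachable from $s_1$ or $s_2$ is also a normal form of $s$, that normalization guarantees such normal forms exist, and that uniqueness forces them all to equal the single normal form of $s$ --- only phrased as a direct construction of the common reduct rather than as the paper's proof by contradiction. The one presentational difference is the ``only if'' direction, which the paper dispatches by citing \cite{BaderNipkow1999} as a known result, whereas you spell out the standard argument (two normal forms of $s$ must join by confluence, and being irreducible they admit only empty paths, hence coincide); this is a harmless and correct addition.
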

\begin{proof}
``If'': 
By contradiction: Let $R^P$ be a normalizing (P)ARS; assume that every element has a unique normal form and that $R^P$ is not confluent. By non-confluence, there exist $s_1 \leftarrow^* s \rightarrow^* s_2$ for which there does not exists a $t$ such that $s_1 \rightarrow^* t \leftarrow^* s_2$. However, $s$ has one unique normal form $t'$, i.e., $\{t'\}= \NF{R}(s)$. By definition of normal forms of $s$, we have that $\forall s'\colon s \rightarrow^* s' \Rightarrow \NF{R}(s) \supseteq \NF{R}(s')$. This holds specifically for $s_1$ and $s_2$, i.e.,  $\{t'\}=\NF{R}(s) \supseteq \NF{R}(s_1)$ and $\{t'\}= \NF{R}(s) \supseteq \NF{R}(s_2)$. Since $R$ is normalizing, every element has at least one normal form, i.e., $\NF{R}(s_1) \neq \emptyset$ and $\NF{R}(s_2) \neq \emptyset$, leaving one possibility: 
$ \NF{R}(s_1) = \NF{R}(s_2) = \{t'\}$. From this result we obtain
 $s \rightarrow^* s_1 \rightarrow^* t'$  and   $s \rightarrow^* s_2 \rightarrow^* t'$; contradiction. 
``Only if'': This is a known result; see, e.g.,~\cite{BaderNipkow1999}.
\end{proof}

\begin{lemma}\label{lem:probterm+conf2probconfl}
If a PARS is almost-surely terminating and confluent 
 then it is almost-surely convergent.
\end{lemma}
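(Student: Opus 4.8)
The plan is to reduce the statement to the building blocks already established earlier in the paper, so the proof is essentially a chaining argument. First I would extract the \emph{unique normal form} structure of the underlying ARS. Since the PARS is almost-surely terminating, Lemma~\ref{lem:probterm2norm} gives that it is normalizing; together with the hypothesis of confluence, Proposition~\ref{prelem:norm+conf2uniqNF} then yields that every element possesses a unique normal form. This uniqueness is the pivot of the whole argument.

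Next, given an arbitrary peak $s_1 \leftarrow^* s \rightarrow^* s_2$, I would let $t$ denote the unique normal form of $s$ and show that $t$ is simultaneously the unique normal form of $s_1$ and of $s_2$. The key observation is that reachability can only shrink the set of reachable normal forms: since $s \rightarrow^* s_1$ we have $\NF{R}(s_1) \subseteq \NF{R}(s) = \{t\}$, and normalization forbids $\NF{R}(s_1) = \emptyset$, forcing $\NF{R}(s_1) = \{t\}$; symmetrically $\NF{R}(s_2) = \{t\}$. In particular $s_1 \rightarrow^* t$ and $s_2 \rightarrow^* t$, so $t$ is already the common joining element demanded by the definition of almost-sure convergence.

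Finally, to upgrade this to the probabilistic half of the statement, I would invoke Proposition~\ref{prop:prob-dist} at $s_1$. The distribution identity $\sum_{t' \in \NF{R}(s_1)} P(s_1 \rightarrow^* t') + P(s_1 \rightarrow^\infty) = 1$ collapses, because $\NF{R}(s_1)$ is the singleton $\{t\}$, to $P(s_1 \rightarrow^* t) + P(s_1 \rightarrow^\infty) = 1$; almost-sure termination makes the divergence term vanish, so $P(s_1 \rightarrow^* t) = 1$, and the identical computation at $s_2$ gives $P(s_2 \rightarrow^* t) = 1$. This is precisely almost-sure convergence.

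I do not anticipate a serious obstacle, as the content is genuinely a recombination of prior results. The only step that needs care is the transfer of uniqueness from $s$ to $s_1$ and $s_2$ via the inclusion $\NF{R}(s') \subseteq \NF{R}(s)$ whenever $s \rightarrow^* s'$ (the same monotonicity already used inside Proposition~\ref{prelem:norm+conf2uniqNF}); I would state it explicitly rather than appeal to it implicitly, and I would make sure the singleton reasoning relies on normalization to rule out the empty normal-form set.
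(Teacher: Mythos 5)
Your proof is correct and follows essentially the same route as the paper's: Lemma~\ref{lem:probterm2norm} plus Proposition~\ref{prelem:norm+conf2uniqNF} yield a unique normal form for every element, and almost-sure termination (via the distribution identity of Proposition~\ref{prop:prob-dist}) forces that normal form to be reached with probability $1$. The paper's proof is just a terser statement of the same argument; your explicit treatment of the inclusion $\NF{R}(s')\subseteq\NF{R}(s)$ for $s\rightarrow^* s'$ fills in a step the paper leaves implicit.
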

\begin{proof}
Lemma \ref{lem:probterm2norm} and Prop.\ \ref{prelem:norm+conf2uniqNF}  ensure that an a-s.\ terminating system has a unique normal form.
A-s.\ termination also ensures that this unique normal form is reached with probability $=1$, and thus the system is almost-surely convergent.
\end{proof}

\begin{lemma}\label{lem:probconfl2confl} 
A PARS is confluent if it is almost-surely convergent. 
\end{lemma}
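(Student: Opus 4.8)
The plan is to prove the contrapositive-style implication directly: assuming a PARS $R^P$ is almost-surely convergent, I want to show that its underlying ARS $R$ is confluent in the classical sense. Recall the definition of almost-sure convergence: for all $s_1 \leftarrow^* s \rightarrow^* s_2$ there is a normal form $t \in \NF{R}$ with $s_1 \rightarrow^* t \leftarrow^* s_2$ and $P(s_1 \rightarrow^* t) = P(s_2 \rightarrow^* t) = 1$. Classical confluence asks for exactly the first part of this statement, namely that for all $s_1 \leftarrow^* s \rightarrow^* s_2$ there exists a (not necessarily normal-form) $t$ with $s_1 \rightarrow^* t \leftarrow^* s_2$. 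So the implication is essentially immediate by weakening.

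First I would fix an arbitrary peak $s_1 \leftarrow^* s \rightarrow^* s_2$. By almost-sure convergence applied to this peak, I obtain a normal form $t \in \NF{R}$ satisfying $s_1 \rightarrow^* t \leftarrow^* s_2$ together with the probability conditions $P(s_1 \rightarrow^* t) = P(s_2 \rightarrow^* t) = 1$. The second step is simply to observe that this same $t$ witnesses the confluence requirement: we already have $s_1 \rightarrow^* t$ and $s_2 \rightarrow^* t$, which is precisely the common reduct demanded by the definition of confluence. The probability conditions are stronger than needed and may be discarded, and there is no requirement in confluence that $t$ be a normal form, so a fortiori the witness suffices.

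Since the peak was arbitrary, confluence of $R^P$ (i.e.\ of its underlying ARS) follows. The only subtlety worth noting is purely definitional: confluence is a property of the underlying ARS, as stated at the end of Definition~\ref{def:pars}, and almost-sure convergence is stated over the same reduction relation $\rightarrow^*$, so the two statements speak about the same paths and no translation between probabilistic and non-probabilistic reductions is required. I do not anticipate a genuine obstacle here — the content of the lemma is that almost-sure convergence is a strictly stronger notion than confluence, and the proof is a one-line weakening argument that drops the normal-form and probability-one requirements from the witness.
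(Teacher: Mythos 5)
Your proof is correct and takes exactly the same route as the paper: apply the definition of almost-sure convergence to an arbitrary peak $s_1 \leftarrow^* s \rightarrow^* s_2$ and observe that the resulting normal form $t$ already witnesses confluence, with the probability-one and normal-form conditions simply discarded. The paper's own proof is this same one-line weakening argument, just stated more tersely.
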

\begin{proof}
Assume almost-sure convergence, then for each $s_1 \leftarrow^* s \rightarrow^* s_2$ there exists a $t$ (a normal form) such that $s_1 \rightarrow^* t \leftarrow^* s_2$.
\end{proof}

\section{Showing Probabilistic Confluence by Transformation}\label{sec:transformation}
 
The following proposition is a weaker formulation and consequence of Theorem \ref{thm:probterm+confl<=>probconfl}; it shows that 
(dis)proving confluence for almost-surely terminating systems is very relevant when (dis)proving almost-sure convergence.

\begin{proposition}\label{lem:probterm:probconfl=confl}
An almost-surely terminating PARS is almost-surely convergent if and only if it is confluent.
\end{proposition}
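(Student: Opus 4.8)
The plan is to derive this proposition as an immediate corollary of Theorem~\ref{thm:probterm+confl<=>probconfl}, carrying almost-sure termination as a standing hypothesis throughout. Let me abbreviate the three properties as $T$ (``almost-surely terminating''), $C$ (``confluent''), and $V$ (``almost-surely convergent''). Theorem~\ref{thm:probterm+confl<=>probconfl} asserts $(T \wedge C) \Leftrightarrow V$, whereas the present proposition asks us to establish $C \Leftrightarrow V$ under the assumption that $T$ already holds. The only substantive task is therefore to verify that fixing $T$ as a hypothesis collapses the three-part biconditional correctly in each direction.

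First I would dispatch the ``only if'' direction, that an almost-surely convergent system is confluent. This needs no appeal to the termination hypothesis: it is exactly the content of Lemma~\ref{lem:probconfl2confl}, whose proof simply reads off, from the definition of almost-sure convergence, the common normal form $t$ witnessing that each peak $s_1 \leftarrow^* s \rightarrow^* s_2$ joins as $s_1 \rightarrow^* t \leftarrow^* s_2$. Hence the implication $V \Rightarrow C$ holds unconditionally, and in particular under the assumption $T$.

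For the converse ``if'' direction, I would assume $C$ and combine it with the standing hypothesis $T$, so that $T \wedge C$ holds. I then invoke the forward implication of Theorem~\ref{thm:probterm+confl<=>probconfl}, which is established as Lemma~\ref{lem:probterm+conf2probconfl}: an almost-surely terminating and confluent PARS is almost-surely convergent. This yields $V$ directly and completes $C \Rightarrow V$ under the hypothesis $T$.

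I do not anticipate any genuine obstacle, since the statement is merely a repackaging of the already-proved theorem; the result is weaker precisely because it presupposes almost-sure termination rather than concluding it. The one point deserving care is bookkeeping: one must confirm that the almost-sure termination furnished by the standing assumption is exactly the premise consumed by Lemma~\ref{lem:probterm+conf2probconfl}, and that the reverse implication genuinely dispenses with it (so that the equivalence is not secretly circular). Both checks are immediate from the cited statements.
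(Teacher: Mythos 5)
Your proposal is correct and matches the paper's own proof exactly: the paper also derives Proposition~\ref{lem:probterm:probconfl=confl} as a direct consequence of Theorem~\ref{thm:probterm+confl<=>probconfl}, noting the same alternative decomposition into Lemma~\ref{lem:probterm+conf2probconfl} for the ``if'' direction and Lemma~\ref{lem:probconfl2confl} for the ``only if'' direction. Your additional bookkeeping remarks (that $V \Rightarrow C$ needs no termination hypothesis, and that no circularity arises) are sound and simply make explicit what the paper leaves implicit.
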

\begin{proof}
This is a direct consequence of Theorem \ref{thm:probterm+confl<=>probconfl} (or using Lemma \ref{lem:probterm+conf2probconfl} and \ref{lem:probconfl2confl}).
\end{proof}
%
Curien and Ghelli~\cite{CurienGhelli1991} presented a general method for proving confluence by 
 transforming\footnote{This is also referred to as interpreting a system elsewhere, e.g., \cite{CurienGhelli1991}.} the system of interest (under some restrictions) to a new system which is known to be confluent. We start by repeating their relevant result.
 \begin{lemma}[\cite{CurienGhelli1991}]\label{lem:CG}
Given two ARS $R=(A,\rightarrow_R)$ and $R'=(A,\rightarrow_{R'})$ and a mapping $G\colon A \rightarrow A'$, then $R$ is confluent if the following holds.
\begin{enumerate}[(C1)]
\item $R'$ is confluent,
\label{lem:CG:C1}
\item $R$ is normalizing, \label{lem:CG:C2}
\item if 
 $s \prescript{}{}{\rightarrow}_{{R}}\, t$ then 
$G(s) \prescript{}{}{\leftrightarrow}_{{R'}}^{*} 
 G(t)$\label{lem:CG:C3},
\item $\forall t \in \NF{R},\; G(t) \in \NF{R'}$, 
and\label{lem:CG:C4}
\item $\forall t,u \in \NF{R},\;G(t)=G(u)\,\Rightarrow\, t=u$\label{lem:CG:C5}
\end{enumerate}
\end{lemma}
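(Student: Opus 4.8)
The plan is to reduce confluence of $R$ to uniqueness of normal forms and then transport the problem along $G$ into the confluent system $R'$. Since condition (C2) tells us that $R$ is normalizing, Proposition~\ref{prelem:norm+conf2uniqNF} applies: a normalizing system is confluent if and only if every element has a \emph{unique} normal form. As normalization already guarantees at least one normal form per element, the whole argument reduces to proving uniqueness.

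So I would fix an arbitrary $s\in A$ and suppose it has normal forms $t,u\in\NF{R}$, i.e.\ $s\rightarrow_R^* t$ and $s\rightarrow_R^* u$; the goal is $t=u$. First I would lift condition (C3) from single steps to reduction sequences by a routine induction on path length: since $\leftrightarrow_{R'}^*$ is reflexive (the empty path) and transitive, $s\rightarrow_R^* t$ yields $G(s)\leftrightarrow_{R'}^* G(t)$, and likewise $G(s)\leftrightarrow_{R'}^* G(u)$. Using symmetry and transitivity of $\leftrightarrow_{R'}^*$ once more gives $G(t)\leftrightarrow_{R'}^* G(u)$.

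The crux is to convert this $R'$-conversion into a genuine joinability statement. Here I would invoke the Church--Rosser theorem --- the standard equivalence between confluence and the Church--Rosser property --- which by (C1) applies to $R'$: from $G(t)\leftrightarrow_{R'}^* G(u)$ we obtain a common reduct $v$ with $G(t)\rightarrow_{R'}^* v\leftarrow_{R'}^* G(u)$. Now (C4) enters: because $t,u\in\NF{R}$, both $G(t)$ and $G(u)$ are $R'$-normal forms, so neither admits any $R'$-reduction. Consequently $G(t)\rightarrow_{R'}^* v$ and $G(u)\rightarrow_{R'}^* v$ must both be empty, forcing $v=G(t)$ and $v=G(u)$, hence $G(t)=G(u)$. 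Finally (C5), the injectivity of $G$ on normal forms, delivers $t=u$, completing the uniqueness argument and therefore the proof.

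I expect the main obstacle to be this middle step --- passing from the conversion $G(t)\leftrightarrow_{R'}^* G(u)$ to a common reduct $v$ --- because it relies on the Church--Rosser equivalence rather than on confluence applied directly to a single peak. Everything else (the inductive lift of (C3), the collapse of reductions issuing from $R'$-normal forms, and the two appeals to the behaviour of $G$ on normal forms) is routine. It is worth being explicit that (C3) guarantees only a \emph{conversion} $\leftrightarrow_{R'}^*$ and not a reduction, which is exactly why the Church--Rosser characterisation, and not mere confluence, is the tool that is needed.
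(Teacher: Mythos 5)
The paper states this lemma as a quotation from Curien and Ghelli~\cite{CurienGhelli1991} and gives no proof of it at all, so there is no in-paper argument to compare yours against; your proposal must therefore stand on its own, and it does. The reduction via (C2) and Proposition~\ref{prelem:norm+conf2uniqNF} to uniqueness of normal forms is legitimate (that proposition appears earlier in the paper, in Section~\ref{sec:props-of-PARS}), the inductive lift of (C3) from single steps to $\rightarrow_R^*$ is routine and correctly handles the empty path, and your key observation is exactly right: (C3) only yields a \emph{conversion} $G(t)\leftrightarrow_{R'}^*G(u)$, never a peak, so plain confluence of $R'$ cannot be applied directly and one must pass through the Church--Rosser equivalence (confluence iff every conversion is joinable) to extract the common reduct $v$. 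From there, (C4) collapses both reductions out of the $R'$-normal forms $G(t)$ and $G(u)$, forcing $G(t)=v=G(u)$, and (C5) gives $t=u$. This is in fact the standard proof of interpretation lemmas of this kind, essentially the argument of Curien and Ghelli themselves, and it is also stylistically consistent with how the paper proves its own extension, Lemma~\ref{lem:CG:conf=conf}, whose ``$\Leftarrow$'' direction likewise argues through normal forms and injectivity. The only dependency you import from outside the paper is the Church--Rosser theorem, which is a textbook result (e.g.,~\cite{BaderNipkow1999}) and a fair thing to cite.
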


\noindent
We present a version which permits also non-confluence of the transformed system to imply non-confluence of the original system. Notice that (C2)--(C5) is a part of (C2$'$)--(C5$'$), and in particular (C4$'$) requires additionally that only normal forms are mapped to normal forms. 
\begin{lemma}\label{lem:CG:conf=conf}
Given two ARS $R=(A,\rightarrow_R)$ and $R'=(A,\rightarrow_{R'})$ and a mapping $G\colon A \rightarrow A'$, satisfying
\begin{enumerate}[(C1$\,'$)]
\item (surjective) $\forall s' \in A', \;\;\exists s \in A, G(s) =s'$, 
\label{lem:CG:conf=conf:C1}
\item $R$ and $R'$ are normalizing,  \label{lem:CG:conf=conf:C2}
\item if $s \prescript{}{}{\rightarrow}_{{R}}\, t$ then 
$G(s) \prescript{}{}{\leftrightarrow}_{{R'}}^{*}\, G(t)$, and \\ 
if $G(s) \prescript{}{}{\leftrightarrow}_{{R'}}^{*}\, G(t)$ then $s \prescript{}{}{\leftrightarrow}_{{R}}^{*}\, t$,
\label{lem:CG:conf=conf:C3}
\item $\forall t \in \NF{R},\; G(t) \in \NF{R'}$, and 
$\forall t' \in \NF{R'},\; G^{-1}(t') \subseteq \NF{R}$,
\label{lem:CG:conf=conf:C4}
\item (injective on normal forms) ${\forall t,u \in \NF{R},\;G(t)=G(u)\Rightarrow t=u},$\label{lem:CG:conf=conf:C5}
\end{enumerate}
then $R$ is confluent iff $R'$ is confluent.
\end{lemma}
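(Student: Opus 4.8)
The plan is to reduce this biconditional to two applications of the Curien--Ghelli result, Lemma~\ref{lem:CG}, one in each direction. The key observation is that conditions (C1$'$)--(C5$'$) have been engineered to be \emph{symmetric} between $R$ and $R'$: each of them either already holds for both systems or becomes its own mirror image when the roles of $R$ and $R'$ are swapped and $G$ is replaced by a suitable inverse map. So I would first verify that (C2$'$)--(C5$'$) literally contain (C2)--(C5) of Lemma~\ref{lem:CG}, and then argue that the reversed hypotheses packaged into (C1$'$), (C3$'$), and (C4$'$) supply exactly the mirrored conditions needed to run Lemma~\ref{lem:CG} the other way.

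For the direction ``$R'$ confluent $\Rightarrow$ $R$ confluent,'' I would simply instantiate Lemma~\ref{lem:CG} with the given $G\colon A\to A'$. Condition (C1) is the assumed confluence of $R'$; (C2) is the normalization of $R$ from (C2$'$); (C3) is the forward half of (C3$'$); (C4) is the forward half of (C4$'$); and (C5) is exactly (C5$'$). All five premises are immediately available, so this direction is essentially a citation.

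The harder and more interesting direction is ``$R$ confluent $\Rightarrow$ $R'$ confluent,'' because Lemma~\ref{lem:CG} is stated for a \emph{function} $G$, whereas to swap the roles I need a map $A'\to A$. Here the plan is to choose, for each $s'\in A'$, some preimage in $G^{-1}(s')$ --- nonempty by surjectivity (C1$'$) --- and thereby define a section $H\colon A'\to A$ with $G\circ H=\mathrm{id}_{A'}$. I would then apply Lemma~\ref{lem:CG} to the pair $(R',R)$ with the mapping $H$: (C1) becomes confluence of $R$ (the hypothesis of this direction); (C2) is normalization of $R'$; (C3), namely $s'\rightarrow_{R'}t'\Rightarrow H(s')\leftrightarrow_R^* H(t')$, follows from the backward half of (C3$'$) applied to $s'=G(H(s'))\leftrightarrow_{R'}^* G(H(t'))=t'$; (C4), namely $t'\in\NF{R'}\Rightarrow H(t')\in\NF{R}$, follows from the backward half of (C4$'$) since $H(t')\in G^{-1}(t')\subseteq\NF{R}$; and (C5), injectivity of $H$ on $\NF{R'}$, holds because if $H(t')=H(u')$ then applying $G$ gives $t'=u'$. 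The main obstacle to watch is the verification of this mirrored (C3): I must confirm that the backward implication in (C3$'$) is genuinely what is needed and that $G(H(s'))=s'$ lets me rewrite the $R'$-convertibility hypothesis in terms of images of $H$-preimages, rather than needing convertibility of the original elements. Once both directions are established, the biconditional follows, completing the proof.
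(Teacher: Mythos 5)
Your proposal is correct, and its easy direction (confluence of $R'$ implies confluence of $R$, by directly citing Lemma~\ref{lem:CG} with the given $G$) coincides with the paper's. For the harder direction you take a genuinely different route. The paper argues by contradiction, entirely at the level of normal forms: assuming $R$ confluent and $R'$ not, it takes a non-joinable peak $s'_1 \leftarrow_{R'}^* s' \rightarrow_{R'}^* s'_2$, uses normalization of $R'$ (C2$'$) to extend $s'_1,s'_2$ to necessarily distinct normal forms $t'_1\neq t'_2$, pulls these back through surjectivity (C1$'$) and the backward half of (C4$'$) to normal forms $t_1,t_2\in \NF{R}$ with $t_1\neq t_2$, transfers $t'_1 \leftrightarrow_{R'}^* t'_2$ to $t_1 \leftrightarrow_{R}^* t_2$ by the backward half of (C3$'$), and concludes $t_1=t_2$ from confluence of $R$ --- a contradiction; note that this last step silently invokes the Church--Rosser property (in a confluent system, convertible elements are joinable, so convertible normal forms are equal). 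Your proof instead constructs a section $H\colon A'\to A$ of $G$ and re-applies Lemma~\ref{lem:CG} with the roles of the two systems swapped; your verifications of the mirrored (C2)--(C5) are all correct, and in particular the delicate mirrored (C3) does follow from $G\circ H = \mathrm{id}_{A'}$ together with the universally quantified backward half of (C3$'$). The only point to make explicit is the existence of $H$: this needs a choice of preimage for each $s'\in A'$, which is unproblematic here (and needs no appeal to the axiom of choice) because $A'$ is countable by the paper's definition of ARS, so one can pick the least-indexed preimage under a fixed enumeration of $A$. What your approach buys is modularity and a conceptual explanation of why the primed conditions are symmetric in $R$ and $R'$: both directions become instances of Curien--Ghelli, and the Church--Rosser-style reasoning is delegated to that lemma rather than repeated. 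What the paper's approach buys is self-containedness: no auxiliary map is constructed, and the argument is a short explicit chase on normal forms.
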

\begin{proof}
``$\Rightarrow$'': follows from Lemma \ref{lem:CG}.\\
``$\Leftarrow$'': Assume that $R$ is confluent and $R'$ is not confluent, i.e., there exist 
$s'_1 \prescript{}{}{\leftarrow}_{{R'}}^*\, s' \prescript{}{}{\rightarrow}_{{R'}}^*\, s'_2$ for which $\nexists t' \in R'\colon s'_1 \prescript{}{}{\rightarrow}_{{R'}}^*\,  t' \prescript{}{}{\leftarrow}_{{R'}}^*\,  s'_2$.\\\hbox to 1em{}By (C\ref{lem:CG:conf=conf:C2}$'$):
$\exists t'_1,t'_2 \in \NF{R'}\colon t'_1 \prescript{}{}{\leftarrow}_{{R'}}^*\, s'_1\prescript{}{}{\leftarrow}_{{R'}}^*\, s'  \prescript{}{}{\rightarrow}_{{R'}}^*\, s'_2 \prescript{}{}{\rightarrow}_{{R'}}^*\, t'_2 $ where $t'_1\neq t'_2$.\\
\hbox to 1em{}By (C\ref{lem:CG:conf=conf:C1}$'$) and (C\ref{lem:CG:conf=conf:C4}$'$): $\exists t_1,t_2 \in \NF{R} \colon G(t_1) = t'_1 \land G(t_2) = t'_2$\\
\hbox to 1em{}By (C\ref{lem:CG:conf=conf:C5}$'$): $t_1 \neq t_2$\\
\hbox to 1em{}By (C\ref{lem:CG:conf=conf:C3}$'$): $t'_1 \prescript{}{}{\leftrightarrow}_{{R'}}^*\, t'_2  
 \Rightarrow t_1 \prescript{}{}{\leftrightarrow}_{{R}}^*\, t_2  $\\
\hbox to 1em{}By confluence of $R$:  $t_1 = t_2$ (contradicts $t_1 \neq t_2$).
\end{proof}
We summarize the application of the above to probabilistic systems
in Theorems~\ref{thm:probterm+conflR'=>conflR} and~\ref{thm:tuttelut}.
\begin{theorem}\label{thm:probterm+conflR'=>conflR}
 An almost-surely terminating PARS $R^P=((A,\prescript{}{}{\rightarrow}_{{R}}),P)$ is {almost-surely convergent} 
if there exists an ARS $R'=(A',\prescript{}{}{\rightarrow}_{{R'}})$ and a mapping $G\colon A \rightarrow A'$ which together with $(A,\prescript{}{}{\rightarrow}_{{R}})$ satisfy (C\ref{lem:CG:C1})--(C\ref{lem:CG:C5}).
\end{theorem}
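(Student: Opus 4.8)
The plan is to combine Theorem~\ref{thm:probterm+confl<=>probconfl} with the Curien--Ghelli transformation lemma (Lemma~\ref{lem:CG}). The statement asserts that an almost-surely terminating PARS $R^P$ is almost-surely convergent provided there is a transformation into a known ARS $R'$ satisfying (C\ref{lem:CG:C1})--(C\ref{lem:CG:C5}). By Theorem~\ref{thm:probterm+confl<=>probconfl}, almost-sure convergence is equivalent to almost-sure termination together with (classical) confluence of the underlying ARS. Since almost-sure termination of $R^P$ is given as a hypothesis, the entire burden of the proof reduces to establishing that the underlying ARS $(A,\rightarrow_R)$ is confluent. This is exactly what Lemma~\ref{lem:CG} delivers from conditions (C\ref{lem:CG:C1})--(C\ref{lem:CG:C5}).

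First I would invoke Lemma~\ref{lem:tem2probterm}'s companion direction --- more precisely, the fact that almost-sure termination implies normalization (Lemma~\ref{lem:probterm2norm}) --- to confirm that condition (C\ref{lem:CG:C2}), requiring $R$ to be normalizing, is automatically satisfied. This is the one hypothesis of Lemma~\ref{lem:CG} that is not explicitly restated in the theorem but is instead supplied by the almost-sure termination assumption. The remaining conditions (C\ref{lem:CG:C1}), (C\ref{lem:CG:C3}), (C\ref{lem:CG:C4}), and (C\ref{lem:CG:C5}) are assumed verbatim in the theorem's hypotheses, so they can be passed directly to Lemma~\ref{lem:CG}.

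With normalization established and the other four conditions in hand, Lemma~\ref{lem:CG} yields that $(A,\rightarrow_R)$ is confluent. At this point I have both halves of the hypothesis of Theorem~\ref{thm:probterm+confl<=>probconfl}: $R^P$ is almost-surely terminating (given) and confluent (just derived). Applying the ``only if'' direction of that theorem (i.e.\ Lemma~\ref{lem:probterm+conf2probconfl}) concludes that $R^P$ is almost-surely convergent, which is the claim.

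I do not expect a genuine obstacle here, since the theorem is essentially a packaging result that threads an existing confluence criterion through the earlier equivalence. The one point deserving care is bridging the gap between the almost-sure termination hypothesis and Lemma~\ref{lem:CG}'s normalization requirement (C\ref{lem:CG:C2}); I must explicitly cite Lemma~\ref{lem:probterm2norm} so that the application of Lemma~\ref{lem:CG} is fully justified rather than silently assuming normalization. Beyond that, the proof is a short chain of citations and the writing should make clear which hypothesis feeds which lemma.
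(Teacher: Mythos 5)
Your proposal is correct and follows essentially the same route as the paper's own proof: derive normalization of $R$ from almost-sure termination via Lemma~\ref{lem:probterm2norm} to supply (C\ref{lem:CG:C2}), apply Lemma~\ref{lem:CG} with the remaining hypotheses to get confluence of $R$, and then conclude almost-sure convergence from almost-sure termination plus confluence (the paper cites Prop.~\ref{lem:probterm:probconfl=confl}, you cite Lemma~\ref{lem:probterm+conf2probconfl} directly, but these are the same underlying result). The only blemish is the phrase ``Lemma~\ref{lem:tem2probterm}'s companion direction,'' which is a confusing way to introduce Lemma~\ref{lem:probterm2norm}, but the citation you actually use is the right one.
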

\begin{proof}
Since $R^P$ is a-s. terminating, $R$ is normalizing (Lemma \ref{lem:probterm2norm}). So, given an ARS $R'$ and $G$ be a mapping from $R$ to $R'$ satisfying (C\ref{lem:CG:C1}), (C\ref{lem:CG:C3})--(C\ref{lem:CG:C5}), we can apply Lemma \ref{lem:CG} and obtain that $R$ and thereby $R^P$ is confluent.
A-s.~convergence of $R^P$ follows from Prop.~\ref{lem:probterm:probconfl=confl}
since $R^P$ is confluent and a-s.\ terminating
\end{proof}

\begin{example}
We consider the nonterminating, {almost-surely terminating} system $R^P$ (below to the left) with the underlying normalizing system $R$ (below, middle), the confluent system $R'$ (below to the right) and the mapping $G(0) \!= 0$, $G({a})\! = {a}$.
%
%
%
$$
R^P\colon\;\begin{tikzpicture}[->, auto, node distance=1.2cm, every loop/.style={},
                    main node/.style={}, baseline=7pt]
\node[main node] (0) [] {0};
  \node[main node] (a) [right of=0] {{a}};  
 \path[every node/.style={font=\tiny}]
 (0) edge [loop above] node {p} (0)
  (0) edge node [above] {1-p} (a);
  \end{tikzpicture}
\qquad\quad
R\colon\;
\begin{tikzpicture}[->, auto, node distance=1.2cm, every loop/.style={},
                    main node/.style={}, baseline=7pt]
\node[main node] (0) [] {0};
  \node[main node] (a) [right of=0] {{a}};
  
 \path[every node/.style={font=\tiny}]
 (0) edge [loop above] node {} (0)
  (0) edge node [above] {} (a);
  \end{tikzpicture}
\qquad\quad
R'\colon\;
  \begin{tikzpicture}[->, auto, node distance=1.2cm, every loop/.style={},
                    main node/.style={}, baseline=-3pt]
\node[main node] (0) [] {0};
  \node[main node] (a) [right of=0] {{a}};
  
 \path[every node/.style={font=\tiny}]
  (0) edge node [above] {} (a);
  \end{tikzpicture}
$$
The systems $R$, $R'$ and the mapping $G$ satisfy (C\ref{lem:CG:C1})--(C\ref{lem:CG:C5}), and therefore we can conclude that $R^P$ is {almost-surely convergent}.
\end{example}

\begin{theorem}\label{thm:tuttelut}
Given an almost-surely terminating PARS $R^P=(R,P)$ with $R=(A,\prescript{}{}{\rightarrow}_{{R}})$, an ARS $R'=(A,\prescript{}{}{\rightarrow}_{{R'}})$ and a mapping $G$ from $A$ to $A'$ which together with $R$ satisfy (C\ref{lem:CG:conf=conf:C1}$'$)--(C\ref{lem:CG:conf=conf:C5}$'$), then system $R^P$ is {almost-surely convergent} if and only $R'$ is confluent.
%
\end{theorem}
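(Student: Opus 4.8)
The plan is to obtain this biconditional by chaining two equivalences already established in the paper, so that almost-sure convergence of $R^P$ is translated first into confluence of the underlying ARS $R$ and then, via the transformation criterion, into confluence of $R'$. The two ingredients are Proposition~\ref{lem:probterm:probconfl=confl}, which says that an almost-surely terminating PARS is almost-surely convergent if and only if its underlying ARS is confluent, and Lemma~\ref{lem:CG:conf=conf}, which says that whenever $G$, $R$, and $R'$ satisfy (C1$'$)--(C5$'$), the system $R$ is confluent if and only if $R'$ is confluent. The hypotheses of the theorem deliver exactly what these two results require: $R^P$ is almost-surely terminating, and $G$ together with $R$ and $R'$ satisfies (C1$'$)--(C5$'$).

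First I would apply Proposition~\ref{lem:probterm:probconfl=confl} to $R^P$. Since $R^P$ is almost-surely terminating, the proposition gives that $R^P$ is almost-surely convergent if and only if $R$ is confluent. Next I would apply Lemma~\ref{lem:CG:conf=conf}, whose conditions (C1$'$)--(C5$'$) hold by assumption; it gives that $R$ is confluent if and only if $R'$ is confluent. Composing the two biconditionals yields that $R^P$ is almost-surely convergent if and only if $R'$ is confluent, which is precisely the claim.

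I do not anticipate a substantive obstacle, since the statement is essentially a packaging of the transformation criterion (Lemma~\ref{lem:CG:conf=conf}) with the probabilistic characterisation of convergence (Proposition~\ref{lem:probterm:probconfl=confl}). The only point deserving a moment's care is checking that the hypotheses of the two cited results are genuinely met. Proposition~\ref{lem:probterm:probconfl=confl} needs almost-sure termination of $R^P$, which is given, and Lemma~\ref{lem:CG:conf=conf} needs (C1$'$)--(C5$'$) in full, including the normalization of both $R$ and $R'$ required by (C2$'$). It is worth observing that almost-sure termination of $R^P$ already forces $R$ to be normalizing by Lemma~\ref{lem:probterm2norm}, so the half of (C2$'$) concerning $R$ is consistent with, and indeed implied by, the standing hypothesis, while the normalization of $R'$ is supplied by (C2$'$) directly. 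Hence both invocations are legitimate and the chain of equivalences closes without gap.
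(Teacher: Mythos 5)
Your proposal is correct and takes essentially the same approach as the paper's own proof: both chain Proposition~\ref{lem:probterm:probconfl=confl} (almost-sure convergence of the a-s.\ terminating $R^P$ is equivalent to confluence of $R$) with the bidirectional transformation result Lemma~\ref{lem:CG:conf=conf} ($R$ confluent iff $R'$ confluent). Your added observation that the $R$-half of (C2$'$) is supplied by Lemma~\ref{lem:probterm2norm} is exactly the remark the paper makes at the start of its proof.
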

\begin{proof} Assume notation as above.
Since $R^P$ is a-s.\ terminating, $R$ is normalizing (Lemma \ref{lem:probterm2norm}), thus satisfying the first part of (C\ref{lem:CG:C2}$'$). So, given an ARS $R'$ and $G$ be a mapping from $A$ to $A'$ which together with $R$ satisfy (C\ref{lem:CG:C1}$'$)--(C\ref{lem:CG:C5}$'$), we can apply Lemma \ref{lem:CG} obtaining that $R$ is confluent iff $R'$ is confluent.
Prop.\ \ref{lem:probterm:probconfl=confl} gives that the a-s.\ terminating $R^P$ is a-s.\ convergent iff  $R'$ is confluent.
\end{proof}

\section{Examples}\label{sec:examples}
In the following we show almost-sure convergence in two different cases that examplifies Theorem \ref{thm:tuttelut}. We use the existing method for showing  almost-sure termination~\cite{Fioriti2015,Bournez05}: 
To prove that a PARS $R^P= ((A,\rightarrow),P)$ is a-s.\ terminating, it suffices to show existence of a \emph{Lyapunov ranking function}, i.e., a measure  $\mathcal{V}: A \rightarrow \mathbb{R}^+$ where $\forall s \in A$ there exists an $\epsilon > 0$ so the \emph{inequality of $s$}, $\mathcal{V}(s) \geq \sum_{s \rightarrow s'} P(s \rightarrow s')\cdot\mathcal{V}(s') + \epsilon$ holds.


\subsection{A Simple, Antisymmetric Random Walk}
We consider $R^P = (R,P)$, depicted in Figure \ref{fig:examples-abc-gy-gy}\protect\subref{fig:rndwalk}, a simple positive antisymmetric 1-dimensional random walk.
In each step the value $n$ can either increase to $n+1$, $P(n\rightarrow n+1)=1/3$, or decrease to $n-1$  (or if at 0 we ``decrease'' to the normal form $a$ instead), $P(n \rightarrow n-1)=P(0 \rightarrow a) =2/3$. 
%
%
Formally, the underlying system $R=(A,\rightarrow)$ is defined by $A= \mathbb{N}\uplus\{a\}$ and $\rightarrow \; = \{0 \rightarrow a\} \uplus \{n \rightarrow n' \mid n,n' \in \mathbb{N}, n' = n +1 \lor n' = n -1\}$. 
%
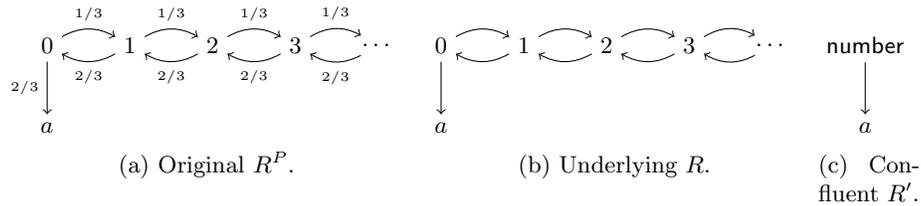
\begin{figure}[b]
    \centering
      \subfloat[][Original $R^P$.]{%
        \centering
        \begin{tikzpicture}[->, auto, node distance=1.1cm, every loop/.style={},
                    main node/.style={}]        
 \node[main node] (x1) [] {0};
\node[main node] (x2) [right of=x1] {1};
\node[main node] (x3) [right of=x2] {2};
\node[main node] (x4) [right of=x3] {3};
  \node[main node] (x5) [right of=x4] {\ldots};
\node[main node] (a) [below of=x1] {$a$};
  \path[every node/.style={font=\sffamily\tiny}]
  (x1) edge [bend left] node [above] {$1/3$} (x2)
  (x2) edge [bend left] node [below] {$2/3$} (x1)
  (x2) edge [bend left] node [above] {$1/3$} (x3)
  (x3) edge [bend left] node [below] {$2/3$} (x2)
  (x3) edge [bend left] node [above] {$1/3$} (x4)
  (x4) edge [bend left] node [below] {$2/3$} (x3) 
  (x4) edge [bend left] node [above] {$1/3$} (x5)
  (x5) edge [bend left] node [below] {$2/3$} (x4)     
  (x1) edge node [left] {$2/3$} (a)
;
\end{tikzpicture}
        \label{fig:rndwalk}
}
    ~
\subfloat[][Underlying $R$.]{%
        \centering
        \begin{tikzpicture}[->, auto, node distance=1.1cm, every loop/.style={},
                    main node/.style={}]        
 \node[main node] (x1) [] {0};
\node[main node] (x2) [right of=x1] {1};
\node[main node] (x3) [right of=x2] {2};
\node[main node] (x4) [right of=x3] {3};
  \node[main node] (x5) [right of=x4] {\ldots};
\node[main node] (a) [below of=x1] {$a$};
  \path[every node/.style={font=\sffamily\tiny}]
  (x1) edge [bend left] node [above] {} (x2)
  (x2) edge [bend left] node [below] {} (x1)
  (x2) edge [bend left] node [above] {} (x3)
  (x3) edge [bend left] node [below] {} (x2)
  (x3) edge [bend left] node [above] {} (x4)
  (x4) edge [bend left] node [below] {} (x3) 
  (x4) edge [bend left] node [above] {} (x5)
  (x5) edge [bend left] node [below] {} (x4)     
  (x1) edge node [left] {} (a)
;
\end{tikzpicture}
        \label{fig:rndwalk:R}
}
    ~
\subfloat[][Confluent $R'$.]{%
        \centering
        \begin{tikzpicture}[->, auto, node distance=1.1cm, every loop/.style={},
                    main node/.style={}]        
 \node[main node] (x1) [] {\textsf{number}};
\node[main node] (a) [below of=x1] {$a$};
  \path[every node/.style={font=\sffamily\tiny}]
  (x1) edge node [left] {} (a)
;
\end{tikzpicture}
        \label{fig:rndwalk:R'}
}
    \caption{Random Walk (1 Dimension)}
    \label{fig:examples-abc-gy-gy}
\end{figure}

\noindent
%
%
%
%
We start by showing $R^P$ a-s.\ terminating, i.e., that a Lyapunov ranking function exists: let the measure $\mathcal{V}$ be defined as follows.
$$\mathcal{V}(s) = \begin{cases}
s+2\text{,} & \text{if }s \in \mathbb{N} \\
1\text{,} & \text{if } s = a\\
 \end{cases}
$$
This
function
is a Lyapunov ranking
since the inequality (see above) holds for all elements $s \in A$; we divide into three cases $s>0$, $s=0$, and $s=a$:
\[\begin{array}{ll}
\mathcal{V}(s) > \frac{1}{3}\cdot \mathcal{V}(s+1) + \frac{2}{3}\cdot \mathcal{V}(s-1) &\Leftrightarrow
s+2 > \frac{1}{3}\cdot (s+3) + \frac{2}{3}\cdot (s+1) \;\;\; (= s+\frac{5}{3}) \\
\mathcal{V}(0) > \frac{1}{3}\cdot \mathcal{V}(1) + \frac{2}{3}\cdot \mathcal{V}(a) &\Leftrightarrow
2 > \frac{1}{3}\cdot 3 + \frac{2}{3}\cdot 1 \text{ , and } \\
\mathcal{V}(a) > 0 & \Leftrightarrow 1 >0.
\end{array}\]
%
Since $R^P$ is a-s.\ terminating, it suffice to define $R' = (\{\textsf{number},a\},\textsf{number} \rightarrow a)$, see Figure \ref{fig:examples-abc-gy-gy}\protect\subref{fig:rndwalk:R'}, and the mapping $G:\mathbb{N}\uplus\{a\} \rightarrow \{\textsf{number},a\}$.
$$G(s) = \begin{cases}
\textsf{number}\text{,} & \text{if }s \in \mathbb{N} \\
a\text{,} & \text{otherwise.}\\
 \end{cases}$$
Because $R^P$ is a-s.\ terminating, $R'$ is (trivially) a confluent system,
and the mapping $G$ satisfies (C\ref{lem:CG:conf=conf:C1}$'$)--(C\ref{lem:CG:conf=conf:C5}$'$) then $R^P$ is a-s.\ convergent (by Theorem~\ref{thm:tuttelut}).

\subsection{Herman's self-stabilizing Ring}
\emph{Herman's Ring}~\cite{Herman90} is an algorithm for self-stabilizing $n$ identical processors connected in an uni-directed ring, indexed 1 to $n$. Each process can hold one or zero tokens, and for each time-step, each  process either keeps its token or passes it to its left neighbour (-1) with probability $1/2$ of each event. When a process keeps its token and receives another, both tokens are eliminated. 

Herman showed that for an initial state with an odd number of tokens, the system will reach a stable state with one token with probability =1. This system is not almost-sure convergent, but proving it for a similar system can be a part of showing that Herman's Ring with 3 processes either will stabilize with 1 token with probability $=1$ or 0 tokens with probability $=1$. We use a boolean array to represent whether each process holds a token (\texttt{1} indicates a token) and is defined as in Figure \ref{fig:examples-hermans}\protect\subref{herman:org}, where both dashed and solid edges indicate reductions.

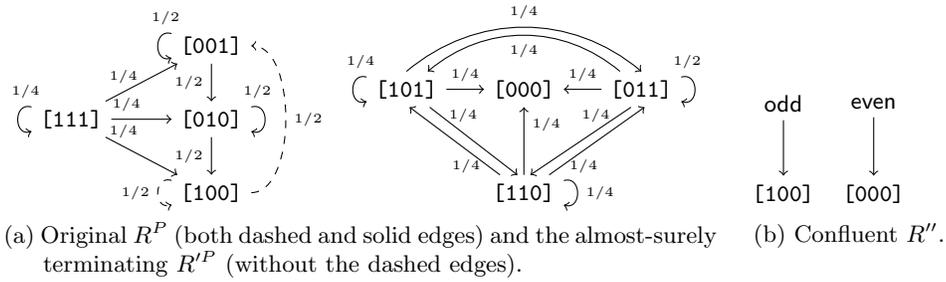
\begin{figure}
    \centering
      \subfloat[][Original $R^P$ (both dashed and solid edges) and the almost-surely \phantom{(a)} terminating $R'^P$ (without the dashed edges).]{%
        \centering
\begin{tikzpicture}[->, auto, node distance=1.2cm, every loop/.style={},  main node/.style={}]
  \node[main node] (111) [] {\texttt{[111]}};       
  \node[main node] (001) [above right=0.5cm and 0.8cm of 111] {\texttt{[001]}};
  \node[main node] (100) [below right=0.5cm and 0.8cm of 111]{\texttt{[100]}};
  \node[main node] (010) [right=0.8cm of 111] {\texttt{[010]}};
  \tikzset{every loop/.style={in=120,out=60,looseness=5}}
  \path[every node/.style={font=\sffamily\tiny}]
  (111) edge node [above,near start] {$1/4$} (001)
  (111) edge node [above,near start] {$1/4$} (010)
  (111) edge node [above,near start] {$1/4$} (100)
  (111) edge [loop left,in=197,out=163,looseness=2] node [above, very near start]{$1/4$} (111)
  (001) edge node [left] {$1/2$} (010)
 (001) edge [loop left,in=197,out=163,looseness=2] node [above, very near start]{$1/2$} (001)
%
  (010) edge node [left] {$1/2$} (100)
  (010) edge [loop left,in=343,out=17,looseness=2] node [above, very near start]{$1/2$} (010)
  ;
  \path[every node/.style={font=\sffamily\tiny}]
   (100) [dashed] edge [loop left,in=197,out=163,looseness=2] node [left]{$1/2$} (100)
  (100) [dashed] edge [bend right=120,in=270,out=270,looseness=0.8] node [right] {$1/2$} (001)
  ;
  \end{tikzpicture}
\begin{tikzpicture}[->, auto,node distance=1.0cm, every loop/.style={},  main node/.style={}]
  \node[main node] (000) [] {\texttt{[000]}};       
  \node[main node] (011) [right =0.5cm of 000] {\texttt{[011]}};
  \node[main node] (110) [below =0.9cm of 000]{\texttt{[110]}};
  \node[main node] (101) [left =0.5cm of 000] {\texttt{[101]}};
  \tikzset{every loop/.style={in=120,out=60,looseness=5}}
  \path[every node/.style={font=\sffamily\tiny}]
 
  (110) edge  node [left,very near start] {$1/4$} (101.280)
(101) edge node [right, very near start]{$1/4$} (110.110)
 
  (101.90) edge [bend left =40] node [above] {$1/4$} (011.90)
  (011) edge [bend right=40] node  {$1/4$} (101)
  
  (101) edge [loop left,in=197,out=163,looseness=2] node [above, very near start]{$1/4$} (101)
  (101) edge node [above] {$1/4$} (000)
  (011) edge node [above] {$1/4$} (000)
  (110) edge node [right, near end] {$1/4$} (000)
  (011) edge [loop left,in=343,out=17,looseness=2] node [above, very near start] {$1/2$} (011)

  (110) edge  node [right, very near start] {$1/4$} (011.280)
  (011.240) edge node [left,very near start] {$1/4$} (110.70)
 
 
  (110) edge [loop right,in=343,out=17,looseness=2] node [right] {$1/4$} (110)
%
  ;
  \end{tikzpicture}
  \label{herman:org}
  }
    \hspace{0.8em}
\subfloat[][\hbox to 4.8em{Confluent $R''$.\hss}]{%
        \centering
\begin{tikzpicture}[->, auto, node distance=1.2cm, every loop/.style={},  main node/.style={}]
  \node[main node] (odd) [] {\textsf{odd}}; 
  \node[main node] (100) [below of=odd] {\texttt{[100]}};
  \node[main node] (even) [right of=odd] {\textsf{even}}; 
  \node[main node] (000) [below of=even] {\texttt{[000]}};
  \path[every node/.style={font=\sffamily\tiny}]
  (odd) edge node [left] {} (100)
  (even) edge node [left] {} (000);
\end{tikzpicture}
\label{herman:new}
    }
      \caption{Herman's self-stabilizing Ring}
    \label{fig:examples-hermans}
\end{figure}
 
\noindent
Since \texttt{[000]} is a normal form  and $\{\mathtt{[100]},\mathtt{[010]},\mathtt{[001]}\}$ is the set of successor-states of each of \texttt{[100]},\texttt{[010]} and \texttt{[001]}, then 
we can prove stabilization of $R^P$ by showing almost-sure convergence for a slightly altered system $R'^P$, i.e., the system in Figure \ref{fig:examples-hermans}\protect\subref{herman:org} consisting of the solid edges only. 
 %

To show almost-sure convergence of $R'^P$, we prove almost-sure termination by showing the existence of a Lyapunov ranking function, namely $\mathcal{V}([b_1\,b_2\,b_3]) = 2^2\cdot (b_1+b_2+b_3) + b_1\cdot 2^0+b_2\cdot 2^{1}+b_3\cdot 2^2$, which decreases, firstly, with the reduction of tokens and, secondly, by position of the tokens. The only two states where $\mathcal{V}$ increases in a direct successor are \texttt{[110]} and \texttt{[101]} where the inequality of $\mathtt{[110]}$
 reduces to $11 > 9+\frac{1}{2}$ 
and that of $\mathtt{[101]}$
to $14 > 9+\frac{1}{2}$
showing $R^P$ to be a-s.~terminating.

We provide, now, a mapping $G$ from the elements of the underlying system into the elements of a trivially confluent system, i.e.,  $R''$ in Figure \ref{fig:examples-hermans}\protect\subref{herman:new}:
$$\begin{array}{rl}
G(\mathtt{[100]})= \mathtt{[100]} \qquad\hbox to 0.9em{} G(\mathtt{[000]}) &= \mathtt{[000]}\\
G(\mathtt{[111]}) = G(\mathtt{[001]}) = G(\mathtt{[010]}) &= \mathsf{odd}\\
G(\mathtt{[011]}) = G(\mathtt{[101]})= G(\mathtt{[110]})&=\mathsf{even}
\end{array}$$
The $R^P$ is a-s.\ term., $R''$ is confluent and $G$ satisfy (C\ref{lem:CG:conf=conf:C1}')--(C\ref{lem:CG:conf=conf:C5}'), then (by  Thm.~\ref{thm:tuttelut}) $R^P$ is {a-s.\ convergent}.

%
%

%
%
%
%
%
%
%
%
%
%
%
%
%
%
%
%

\section{Related Work}\label{sec:related}
We see our work as a succession of the earlier work by
Bournez and Kirchner~\cite{DBLP:conf/rta/BournezK02},
with explicit and simple definitions (instead of referring to Homogeneous Markov Chain theory) and proofs of central properties, and
showing novel properties that are important for showing (non-) convergence.
Our work borrows inspirations from the result of~\cite{Fruhwirth2002,Sneyers2010a,Sneyers2012}, given specifically
for probabilistic extensions of the programming languages CHR.
A notion of so-called nondeterministic PARS
have been introduced, e.g.,~\cite{Bournez05,Fioriti2015},
in which the choice of probability distribution for next reduction
is nondeterministic; these are not covered by our results.

PARS can be implemented directly in Sato's PRISM System~\cite{Sato1995,Sato2008},
which is a probabilistic version of Prolog,
and recent progress for
nonterminating programs~\cite{DBLP:journals/tplp/SatoM14} may be useful
convergence considerations.

\section{Conclusion}\label{sec:conclusion}
We have considered {almost-sure convergence} -- and how to prove it -- for probabilistic abstract reduction systems.
Our motivation is the application of such systems as computational systems having a deterministic input-output relationship,
and therefore {almost-sure termination} is of special importance.
We have provided properties that are useful when showing almost-sure (non-) convergence by consequence of other probabilistic and ``classic'' properties and by transformation.
We plan to generalize these results to {almost-sure convergence} modulo equivalence relevant for some Monte-Carlo Algorithms, that produces several correct answers (e.g. Simulated Annealing), and thereby continuing the work we have started for (non-probabilistic) CHR~\cite{Christiansen}.


\newpage
\appendix
\section{Selected proofs}

\mainclaim*

\begin{proof}Part one follows by Definition~\ref{def:pars}.
Part two is shown by defining a sequence of distributions $P^{(n)}, \; n\in \mathbb{N}$, only containing paths up to length $n$, and show that it converges to $P$.
Let
  $\Delta^{(n)}(s,t)$ be the subset of $\Delta(s,t)$ with paths of length $n$ or less, and
  $\Delta^{(n)}(s,\sharp)$ be the set of paths of length $n$, starting in $s$ and ending in a reducible element.
\par\noindent
We can now define $P^{(n)}$ over $\{\Delta^{(n)}(s,t)\mid t \in \NF{R}(s)\} \uplus \{\Delta^{(n)}(s,\sharp)\}$ as follows:
\begin{eqnarray}
\textstyle P^{(n)}(s \rightarrow^* t)  & = &  \textstyle\sum_{\delta \in \Delta^{(n)}(s,t)} P(\delta),\quad \mbox{and} \label{eqpn1}\\
P^{(n)}(s \rightarrow^\infty) & = & \textstyle \sum_{\pi\in\Delta^{(n)}(s,\sharp)} P(\pi).\label{eqpn2}
\end{eqnarray}
First, we prove by induction that $P^{(n)}$ is a distribution for all $n$.
The $P^{(0)}$ is a distribution because: 
\begin{inparaenum}[(i)]
\item If $s$ is irreducible, $P^{(0)}(s \rightarrow^* s)=1$ (the empty-path); and
$P^{(0)}(s \rightarrow^\infty)=0$ (a sum of zero elements).
\item  If $s$ is reducible, $P^{(0)}(s \rightarrow^* s)=0$; and
$P^{(0)}(s \rightarrow^\infty) = \sum_{s\rightarrow t} P(s\rightarrow t)=1$ by Definition \ref{def:pars}.
\end{inparaenum}
%
%
%
\par\noindent
The inductive step: The sets $\Delta^{(n+1)}(s,t)$,  $t\in\NF{R}(s)$, and $\Delta^{(n+1)}(s,\sharp)$ can be constructed by, for each path in $\Delta^{(n)}(s,\sharp)$, create its possible extensions by one reduction.
When an extension leads to a normal form $t$, it is added to $\Delta^{(n)}(s,t)$. Otherwise, i.e., if the new path leads to a reducible, it is included in $\Delta^{(n+1)}(s,\sharp)$. Formally, for any normal form $t$ of $s$, we write:
$$\begin{array}{lcl}
\Delta^{(n+1)}(s,t) & = &  \{ (s\!\shortrightarrow\!\cdots\shortrightarrow\! u\!\shortrightarrow\! t)\hbox to 0.42em{}\mid
               (s\!\shortrightarrow\!\cdots\shortrightarrow\! u)\in \Delta^{(n)}(s,\sharp),\,u\!\rightarrow\!t\} \;  \uplus \; \Delta^{(n)}(s,t) \\
\Delta^{(n+1)}(s,\sharp) &= & \{ (s\!\shortrightarrow\!\cdots\shortrightarrow\! u\!\shortrightarrow\! v)\mid
              (s\!\shortrightarrow\!\cdots\shortrightarrow\! u)\in \Delta^{(n)}(s,\sharp),\,u\!\shortrightarrow\!v,\,u\not\in\NF{R}(s)\} 
\end{array}$$
We show that for a given $s$, the probability mass added to the  $\Delta^{(\,\cdot\,)}(s,t)$ sets  is equal to the probability mass removed from
$\Delta^{(\,\cdot\,)}(s,\sharp)$ as follows (where $\delta_{s\!u} = (s\!\shortrightarrow\!\cdots\shortrightarrow\! u)$).
{\small\begin{eqnarray*}
&&\sum_{t\in\NF{R}(s)} P^{(\!n+1\!)}(s \!\rightarrow^{\!*}\! t) + P^{(\!n+1\!)}(s \rightarrow^{\!\infty})  = \sum_{\substack{t\in\NF{R}(s)\\ \delta \in \Delta^{(\!n+1\!)}(s,t)}} P^{(\!n+1\!)}(\delta)  + P^{(\!n+1\!)}(s \rightarrow^{\!\infty})  \\
&&=\quad\sum_{\mathclap{\substack{t\in\NF{R}(s)\\ \delta_{st} \in \Delta^{\!(\!n\!)}\!(s,t)}}} P^{(\!n\!)}(\delta)
\;\;+\;\; \sum_{\mathclap{\substack{\delta_{s\!u} \in \Delta^{\!(\!n\!)}\!(s,\sharp), \\ u\rightarrow v, v\in\NF{R}(s)}}} P^{(\!n\!)}(\delta)P(u\!\rightarrow\! v)
\;\;+\;\;  \sum_{\mathclap{\substack{\delta_{s\!u} \in \Delta^{\!(\!n\!)}\!(s,\sharp),\\  u \rightarrow v, v\not\in\NF{R}(s)}}} P^{(\!n\!)}\!(\delta)P(u\!\rightarrow\! v)\\
&&=\sum_{\mathclap{t\in\NF{R}(s)}} P^{(\!n\!)}\!(s \!\rightarrow^{\!*}\! t)
+ \sum_{\mathclap{\substack{\delta_{s\!u}\in \Delta^{\!(\!n\!)}\!(s,\sharp), \\ u\!\rightarrow\!v}}} P^{(\!n\!)}\!(\delta)P(u\!\rightarrow\! v) 
=\sum_{\mathclap{t\in\NF{R}(s)}} P^{(\!n\!)}\!(s \!\rightarrow^{\!*}\! t)
+ \sum_{\mathclap{\substack{\delta_{s\!u} \in \Delta^{\!(\!n\!)}\!(s,\sharp)}}} P^{(\!n\!)}(\delta)\!\!
\left(\!\sum_{\substack{ u\!\rightarrow\!v}}
P(u\!\rightarrow\! v)\!\right) \\
&&=\sum_{t\in\NF{R}(s)} P^{(\!n\!)}(s \!\rightarrow^{\!*}\! t)
+ P^{(\!n\!)}(s \rightarrow^{\!\infty}) = \mbox{\Large 1}
\end{eqnarray*}}
\par\noindent
Thus, for given $s$, $P^{(n+1)}$ defines a probability distribution. 
Notice also that the equations above indicate  
that
$P^{(n+1)}(s \rightarrow^* t) \geq P^{(n)}(s \rightarrow^* t)$, for all $t\in\NF{R}(s).$

Finally, for any $s$ and $t\in\NF{R}(s)$,
$\lim_{n\rightarrow\infty}\Delta^{(n)}(s,t) = \Delta(s,t),$
we get (as we consider increasing sequences of real numbers in a closed interval)\break
\hbox{$\lim_{n\rightarrow\infty}P^{(n)}(s\rightarrow^* t) = P(s\rightarrow^* t),$}
and as a consequence of this,\break
$
\lim_{n\rightarrow\infty}P^{(n)}(s\rightarrow^\infty) = P(s\rightarrow^\infty).
$ 
This finishes the proof.
\end{proof}

\secondclaim*

%
\begin{proof}
We assume the characterization in the proof of Proposition~\ref{prop:prob-dist} above,
of $P$ by the limits of the functions $P^{(n)}(s\rightarrow^* t)$
and $P^{(n)}(s\rightarrow^\infty)$ given by equations (\ref{eqpn1}) and (\ref{eqpn2}).
When $\Delta^\infty(s)$ is countable, $\lim_{n\rightarrow\infty}P^{(n)}(s\rightarrow^\infty)=\sum_{\delta\in\Delta^\infty(s)} P(\delta)$. 
\end{proof}
\end{document}